\documentclass[12pt,reqno]{amsart}
\usepackage{amsmath,amsfonts,amsbsy,amsthm}
\usepackage{amssymb}
\usepackage{eucal}
\usepackage{dsfont}
\usepackage[normalem]{ulem}
\usepackage{comment}

\usepackage{xy}
\xyoption{matrix} \xyoption{arrow} \xyoption{arc} \xyoption{color}

\UseCrayolaColors
\usepackage{enumerate}
\usepackage[shortlabels]{enumitem}
\setlist{leftmargin=*}

\usepackage{tikz}
\usetikzlibrary{decorations.pathmorphing,shapes,arrows}

\newcounter{sarrow}

\numberwithin{equation}{section}

\makeatletter
\newtheoremstyle{corsivo}
   {\medskipamount}{\medskipamount}%
   {\itshape}{}%
   {\bfseries}{}%
   { }
   {\thmname{#1}\thmnumber{\@ifnotempty{#1}{ }\@upn{#2}}%
    \thmnote{ {\bfseries(#3)}}.}%
\makeatother

\theoremstyle{corsivo}
\newtheorem{thm}{Theorem}[section]
\newtheorem{lemma}[thm]{Lemma}
\newtheorem{crl}[thm]{Corollary}
\newtheorem{prop}[thm]{Proposition}
\newtheorem{conj}[thm]{Conjecture}
\newtheorem{dfn}[thm]{Definition}

\makeatletter
\newtheoremstyle{dritto}
   {\medskipamount}{\medskipamount}%
   {\rmfamily}{}%
   {\bfseries}{}%
   { }
   {\thmname{#1}\thmnumber{\@ifnotempty{#1}{ }\@upn{#2}}%
    \thmnote{ {\bfseries(#3)}}.}%
\makeatother

\theoremstyle{dritto}

\newtheorem{rmk}[thm]{Remark}
\newtheorem{example}[thm]{Example}


\newcommand{\sub}[1]{_{\mathrm{#1}}}
\newcommand{\abs}[1]{\left\lvert#1\right\rvert}


\newcommand{\ga}{\gamma}  
\newcommand{\la}{\lambda}


\newcommand{\Id}{\mathds{1}}  

\newcommand{\eu}{\mathrm{e}}
\newcommand{\iu}{\mathrm{i}}   

\newcommand{\B}{\mathbb{B}}                     


\newcommand{\N}{\mathbb{N}}
\newcommand{\Z}{\mathbb{Z}}

\newcommand{\R}{\mathbb{R}}
\newcommand{\C}{\mathbb{C}}
\newcommand{\T}{\mathbb{T}}


\newcommand{\U}{\mathcal{U}}


\newcommand{\inner}[2]{\left\langle #1, #2 \right\rangle}

\newcommand{\norm}[1]{\left\| #1 \right\|}

\newcommand{\bra}[1]{\left\langle #1 \right|}
\newcommand{\ket}[1]{\left| #1 \right\rangle}

\newcommand{\set}[1]{ \left\{  #1 \right\}} 
\newcommand{\ve}[1]{\mathbf{#1}}


\DeclareMathOperator{\Tr}{Tr}         

\DeclareMathOperator{\Ran}{Ran} 
\DeclareMathOperator{\Span}{Span}


\DeclareMathOperator{\dist}{dist}
\DeclareMathOperator{\expo}{e}


\newcommand{\ie}{{\sl i.\,e.\ }}   
\newcommand{\eg}{{\sl e.\,g.\ }} 
\newcommand{\cf}{{\sl cf.\ }}      

\newcommand{\virg}[1]{``#1''}
\newcommand{\crucial}[1]{{\it \textbf{#1}}}
\newcommand{\half}{\mbox{\footnotesize $\frac{1}{2}$}}

\newcommand{\jp}[1]{\langle#1\rangle}

\newcommand{\w}{\psi}  
\newcommand{\Tuv}{\tau} 
\newcommand{\Loc}{G}
\newcommand{\mo}{\mathcal{I}_{\textsc{mo}}}
\newcommand{\mi}{\mathcal{I}_{\textsc{mi}}}

\newcommand{\NP}{\Gamma}
\newcommand{\Lattice}{\mathfrak{D}}

\newcommand{\kk}{{\bf k}}

\newcommand{\x}{{\bf x}}
\newcommand{\z}{{\bf z}}

\newcommand{\y}{{\bf y}}


\usepackage{color}


\newcommand{\E}{{\mathrm{e}}}

\newcommand{\Or}{{\mathcal{O}}}




\let\oldfootnote\footnote
\renewcommand{\footnote}[1]{\oldfootnote{\  #1}}

\setlength{\oddsidemargin}{5mm} \setlength{\evensidemargin}{5mm}
\setlength{\textwidth}{15cm}

\setlength{\parskip}{2 mm}

\title[Localization implies Chern triviality in non-periodic insulators]
{Localization of generalized Wannier bases
\\[2mm] implies Chern triviality\\[2mm] in non-periodic insulators}
\author[G. Marcelli, M. Moscolari, G. Panati]{Giovanna Marcelli, Massimo Moscolari, Gianluca Panati}
\date{\today. Final version for arXiv. Paper published in Ann. Henri Poincar\'e, 24, 895-930 (2023). DOI: 10.1007/s00023-022-01232-7 .}

\usepackage{hyperref}


\begin{document}

\begin{abstract} We investigate the relation between the localization of generalized Wannier bases 
and the topological properties of two-dimensional gapped quantum systems of independent electrons in a disordered background, including magnetic fields, as in the case of Chern insulators and quantum Hall systems. We prove that the existence of a well-localized generalized Wannier basis for the Fermi projection implies the vanishing of the Chern character, which is proportional to the Hall conductivity in the linear response regime. 
Moreover, we state a localization dichotomy conjecture for general non-periodic gapped quantum systems.
\end{abstract}

\maketitle

\tableofcontents

\goodbreak


\section{Introduction}
\label{Sec:Intro}

Wannier bases have become, in the last few decades, a fundamental tool in theoretical and computational solid-state physics, as they provide a reasonable compromise between localization in position space and localization in energy, as far as compatible with the uncertainty principle \cite{WannierReview}.
Whenever it is well-localized, a Wannier basis: 
\renewcommand{\labelenumi}{{\rm(\roman{enumi})}} \vspace{-3mm}
\begin{enumerate} 
\item allows to implement numerical algorithms whose computational costs scale only linearly with the system size \cite{Goedecker};
\item provides a key tool for a simple and transparent description of macroscopic polarization and orbital magnetization in solids, yielding to computable formulae \cite{KSV,CeresoliThonhauserVanderbiltResta2006},  
later proved in a broader setting by more advanced mathematical techniques \cite{PanatiSparberTeufel2009, SchulzBaldesTeufel13}; 
\item  allows an efficient numerical treatment of deformed periodic systems \cite{ELu2011};    
\item  helps to justify the so-called \virg{atomic limit}, \ie the description of macroscopic solids as \virg{consisting of well-localized atoms}, a classical paradigm whose violation has opened the new field of Topological Chemistry \cite{BradlynEtAL}.    
\end{enumerate} 
\vspace{-3mm}
Last but not least, the variational characterization of Wannier functions proposed by Marzari and Vanderbilt, has turned Wannier bases into an efficient and flexible computational tool \cite{MaVa, WannierReview, PanatiPisante2013}.

Since these advantages rely on good decay properties of Wannier functions, several works have been dedicated to the rigorous proof of the existence of an exponentially localized Wannier basis in periodic (time-reversal symmetric) insulators.
These research efforts started with the pioneering work of Kohn and de Cloizeaux 
\cite{Kohn59, Cloizeaux1964, Cloizeaux1964a}, 
continued with the work of Nenciu and Helffer-Sj\"ostrand
\cite{Nenciu1983, Nenciu1991, HeSj89}, 
till the modern bundle-theoretic methods \cite{BrouderPanati2007, Panati2007}. 
More recently, the emphasis has shifted from abstract to \emph{algorithmic} proofs of existence \cite{FiMoPa_2, CoHeNe2015,CorneanMonaco17,CorneanMonacoTeufel17,CorneanMonacoMoscolari2019}, which allow a direct numerical implementation \cite{CaLePaSt2016}. 
Neglecting the linear independence condition, the related concept of \emph{Parseval frame} has also been investigated \cite{Kuchment2009,AucklyKuchment,CorneanMonacoMoscolari2019}.

When the Hamiltonian breaks time-reversal symmetry (TRS), as \eg in Chern insulators and Quantum Hall systems, the existence issue becomes more involved and interesting. 
A \emph{Localization-Topology Correspondence}, also dubbed \emph{Localization Dichotomy}, 
has been noticed and proved in \cite{MonacoPanatiPisanteTeufel2018}. There the authors proved that in a gapped $\Gamma$-periodic insulator in dimension $d \leq 3$, with $\Gamma \simeq \Z^d$ a Bravais lattice, a Wannier basis $ w =\set{w_{\ga, a}}_{\ga \in \Gamma, 1 \leq a \leq m}$
which is well-localized, in the sense that there exists $M_{*}<\infty$ such that
$$
\jp{X^2}_w  :=   \sum_{a=1}^m \int_{\R^d} |\x|^2 \, |w_{\ga, a}(\x + \ga)|^2  d\x \leq M_*   
\qquad \forall \ga \in \Gamma, 
$$
exists if and only if the Bloch bundle associated with the Fermi projection $P$  is Chern trivial. 
In $d=2$, the latter condition is equivalent to the vanishing of the (first) \emph{Chern number}, defined by 
\begin{equation} \label{Chern_number_2}
c_1(P) = 
\frac{1}{2\pi} \int_{\T^2_*}   \Tr \Big( 
P(\kk) \big[ \partial_{k_1}P(\kk) , \partial_{k_2}P(\kk) \big] 
\Big) \, dk_1 \wedge dk_2, 
\end{equation}
where $\T^2_* = \R^2/\Gamma^*$ is the $2$-dimensional Brillouin torus, 
and $ \U\sub{BF} \, P \, \U\sub{BF}^{-1}=  \int^{\oplus}_{\T^2_*} P(\kk) \, d\kk$ 
is the Bloch-Floquet-Zak decomposition of  $P$, see \eg \cite{Kuchment2016} or \cite{MonacoPanati2015}.  \newline
For $d=3$ the Chern triviality corresponds to the vanishing of three \virg{first Chern numbers} defined,  
for $i \neq j \in \set{1,2,3}$, by
\footnote{The vanishing of the numbers below expresses the vanishing of the \emph{first Chern class} $c_1(P)$
as an element of the cohomology space $H^1(\T^3_*, \R)$, 
which also implies the vanishing of the \emph{integer} first Chern class in $H^1(\T^3_*, \Z)$,
see \cite{Panati2007}. 
}\  
\begin{equation} \label{Chern_number_3}
c_1(P)_{ij} = \frac{1}{2\pi} \int_{\B_{ij}} 
 \Tr \Big( 
P(\kk) \big[ \partial_{k_i}P(\kk) , \partial_{k_j} P(\kk) \big] 
\Big) \, dk_i \wedge dk_j, 
\end{equation}
where $\B_{ij} \subset \T^3_*$ is the 2-dimensional subtorus of the Brillouin torus $\T^3_* = \R^3/\Gamma^*$ 
obtained by fixing the coordinate different from the $i$-th and the $j$-th (\eg equal to zero). 
Moreover, if the Bloch bundle is Chern trivial, then an exponentially localized Wannier basis always exists for $d\leq 3$ 
\cite{BrouderPanati2007, Panati2007}.  \newline
To avoid any source of confusion, we emphasize here that the  Localization-Topology Correspondence (LTC) 
refers to a different kind of localization mechanism than the one appearing in the well-known Anderson localization. 
Moreover, while Anderson localization concerns the decay properties of \emph{eigenfunctions} 
of Schr\"odinger operators with random potentials, the LTC refers to the decay properties of particular orthornormal bases, namely generalized Wannier bases, that span the spectral subspaces of a (possibly non-periodic) Hamiltonian operator.

This paper aims at the generalization of the LTC from the periodic setting considered in \cite{MonacoPanatiPisanteTeufel2018} to the non-periodic one.  Since both sides of the correspondence, namely Wannier bases and Chern numbers, are defined by using periodicity in an essential way, even the formulation of  a reasonable conjecture requires some care. 

On the side of Wannier bases, a generalization of this concept to non-periodic systems has been discussed since the early work of Kohn and Onffroy \cite{KohnOnffroy1973}. Later, Kivelson noticed that for $d=1$ a generalized Wannier basis is provided by the eigenfunctions of the reduced position operator $\widetilde X := PXP$, where $X$ is the usual position operator \cite{Kivelson1982}. This intuition has been put on solid mathematical grounds in \cite{NenciuNenciu1998}, where it is proved that for any gapped 1-dimensional Schr\"odinger operator the spectrum of $\widetilde X $ is discrete, and the corresponding eigenfunctions form a \crucial{generalized Wannier basis} (GWB)
\footnote{Whenever we use the adjective \virg{generalized} we refer to Wannier functions for non-periodic systems. 
However, this terminology is far from universal. The reader is warned that the adjective \virg{generalized} has been occasionally used referring to Wannier functions for a multi-band periodic system (which are called \emph{composite} Wannier functions in the mathematical literature), as \eg in \cite{MaVa}.     
}, 
as reviewed in Example \ref{exampe:PXP}.  
While the above construction does not generalize to $d>1$,  the existence of a GWB can be proved for several 
specific $d$-dimensional systems, as discussed in Section \ref{Sec:GWB} following \cite{NenciuNenciu1993}
and \cite{DeMonvelNenciuNenciu1995}. 
Moreover, in \cite{CorneanNenciuNenciu2008} it has been shown that the construction of \cite{NenciuNenciu1998} is optimal, in the sense that the obtained GWB has the same exponential decay of the associated Fermi projection. 
An alternative strategy to construct a GWB for a generic $2$-dimensional non-periodic system has been recently suggested, and numerically validated \cite{StubbsWatsonLu20}. We shortly review the whole topic of GWB in Section~\ref{Sec:GWB}.


On the other side of the correspondence, for $d=2$ the Chern number of the Bloch bundle is naturally generalized to non-periodic models  by the  \crucial{Chern character}, defined,  for a sufficiently regular orthogonal projection $P$ acting on $L^2(\R^2)$, by 
\begin{equation} \label{Chern_marker_intro}
C(P)  := 2\pi  \,\, \Tuv  \Big( \iu P  \big[ [X_1, P], [X_2, P] \big] \Big)
\end{equation}
where $\Tuv(\cdot)$ is the trace per unit volume. Whenever $P$ is periodic the above formula reduces to 
\eqref{Chern_number_2}, so that $C(P)=c_1(P)$.  Physically, $\frac{1}{2\pi}C(P)$  gives in Hartree units
\footnote{In this system of units the reduced Planck constant, the electron charge and the electron mass are dimensionless and equal to $1$. In particular, the quantum of charge conductivity in the quantum Hall effect is $\frac{e^2}{h}=\frac{1}{2\pi}$, where $-e$ is the electron charge and $h$ is the Planck constant. 
}   the Hall conductivity of the system,  which agrees with the Hall conductance under mild technical assumptions 
\cite{AvronSeilerSimon1994}.

In connection with solid state physics, formula \eqref{Chern_marker_intro} first appeared -- to the best of our knowledge -- 
in 1986  in a conference proceedings by J.~Bellissard \cite{Bellissard1986}, where $C(P)$ is baptized Chern character and it is specified that formula \eqref{Chern_marker_intro} applies to orthogonal projectors affiliated to a specific $C^*$-algebra of \emph{ergodic} operators (reducing to \emph{periodic} operators in the deterministic case). However, as early envisaged by Bellissard himself, the same formula makes sense in a broader context, for projectors $P$  whose kernel is sufficiently fast decreasing away from the diagonal   \cite{NakamuraBellissard}. In an ergodic setting, this viewpoint and its relation with non-commutative geometry have been deeply explored in \cite{BellissardElstSchulz-Baldes1994}.
The relation with the index of a pair of projections and with the Fredholm index has been also clarified \cite{Bellissard1986, AvronSeilerSimon1994, KellendonkSchulzBaldes19}, see the review paper \cite{Graf2007} and references therein. \newline
The same formula \eqref{Chern_marker_intro} has later been reconsidered in a non-ergodic and non-covariant setting, 
provided the trace per unit volume exists, which it happens in particular for exponentially localized projections, 
as in Definition \ref{dfn:ELP}.  
Interpreted in this broader sense, formula \eqref{Chern_marker_intro} still produces an integer whenever the trace per unit volume exists, as proved in \cite{ElgartGrafSchenker} for discrete models.
The latter proof, which is essentially based on the identity between the Chern character and the index of a
Fredholm operator as previously established in the covariant and ergodic setting \cite{AvronSeilerSimon1994,KellendonkSchulzBaldes19}, 
is generalized to gapped continuum models in Proposition \ref{prop:CIndex}. \newline    
Finally, in 2011 physicists rediscovered an equivalent version of formula  \eqref{Chern_marker_intro}, 
namely  $C(P) = 2\pi \iu \, \tau( [ P X_1 P,  P X_2 P])$, labelling the corresponding quantity as \emph{Chern invariant} \cite{BiancoResta2011} and \emph{Chern marker} \cite{Caio et al 2019, Irsigler et al 2019}.
The latter name has been also used in the recent mathematical literature \cite{MarcelliMonacoMoscolariPanati2018, CorneanMonacoMoscolari2018}.


In this paper, we conjecture a Localization-Topology Correspondence for non-periodic gapped systems 
(Conjecture \ref{ConjectureLocDic}) and we prove one of the conjectured implications, with a non-optimal
threshold (Theorem \ref{MainTheorem}):  
if an orthogonal projection $P$, which acts on $L^2(\R^2)$ and is exponentially localized 
in the sense of Definition~\ref{dfn:ELP}, 
admits a GWB  $\{\w_{\gamma,a}\}_{\gamma \in \Lattice, 1 \leq a \leq m(\gamma)}$ 
(here $\Lattice$ is a discrete set, as in Definition~\ref{GWB})
which is $s$-localized for some $s>4$, 
\ie for $s>4$ there exists $M<+\infty$ such that
$$
\int_{\R^2} \jp{\x - \ga}^{2s} \,\, |\w_{\gamma,a}(\x)|^2 \,d\x \leq M 
\qquad \forall\gamma \in \Lattice,\,1 \leq   a \leq  m(\gamma),
$$
\noindent then the corresponding Chern character $C(P)$ vanishes. Notice that neither periodicity nor covariance with respect to the action of a group is required. The fact that our result does not assume periodicity or covariance makes it suitable 
for applications also to random Schr\"odinger operators \cite{AizenmanWarzel}. 
In that context, it has been proved that localized eigenfunctions, whose localization might be caused by very different mechanisms, do not contribute to the conductivity: the case of localization of eigenfunctions due to deep wells has been analyzed in \cite{NakamuraBellissard}; while the case of Anderson localization has been treated in \cite{GerminetKleinSchenker2007}, see also the earlier work \cite{Kunz}. 
Although Wannier functions are not eigenfunctions, our results say
-- coherently with the latter papers --  that the existence of a well-localized GWB implies the vanishing of the 
transverse charge conductivity.   


\subsection{Further references}
We conclude the Introduction mentioning some results appeared during the revision of our manuscript. The result in this paper was first announced in \cite{MarcelliMonacoMoscolariPanati2018} and a preliminary version of the proof was provided in the PhD thesis of one of the Authors \cite{Moscolari2018}. 
These preliminary papers, together with the results in \cite{MonacoPanatiPisanteTeufel2018}, resparked the interest of part of the community for the analysis of Wannier bases for non-periodic systems. Besides the aforementioned \cite{StubbsWatsonLu20,StubbsWatsonLu20II}, we notice the preprint by Lu and Stubbs \cite{LuStubbs21II} (see also \cite{LuStubbs21I}) where they manage to show Theorem \ref{MainTheorem} with $s>1$. 
Nevertheless, Theorem \ref{MainTheorem} with the optimal threshold $s\geq 1$ is still an open problem.  Finally, it is worth to notice that the LTC has been recently generalized also in a different direction, within the $C^*$-algebraic approach to solid state physics \cite{LudewigThiang19,BourneMesland}.


\medskip

\noindent \textbf{Acknowledgements.}  We are grateful to H.~Cornean, T.~Loring, D.~Monaco, G.~Nenciu, 
M.~Porta,  G.\,C.~Thiang, and S.~Teufel for several stimulating discussions on related aspects of the theory of Wannier bases and of  transport theory. We thank the anonymous Reviewers for several stimulating comments and remarks on the previous version of the paper. \newline
G.\,M. gratefully acknowledges the financial support from the European Research Council (ERC), under the European Union's Horizon 2020 research and innovation programme (ERC Starting Grant MaMBoQ, no. 802901). M.\,M. gratefully acknowledges the financial support from Grant 8021-00084B of the Danish Council for Independent Research $|$ Natural Sciences.
The work of M.\,M. is supported by a fellowship of the Alexander von Humboldt Foundation.

\goodbreak
\section{Setting and fundamental concepts} \label{sec:settings}

As explained in the Introduction, the aim of this paper is to generalize results from the periodic setting to the non-periodic one. In order to model materials that are not exactly crystalline we have to replace the {\it Bravais lattice}
\footnote{Recall that a Bravais lattice $\Gamma \subset \R^d$ is defined as a discrete subgroup of $(\R^d, +)$ with maximal rank. It follows that there exist a (non-unique) linear basis $\set{\ve a_1, \ldots, \ve a_d} \subset{\R^d}$ such that 
	$\Gamma = \Span_{\Z}\set{\ve a_1, \ldots, \ve a_d} \simeq \Z^d$. 
},\  
which models the periodicity of a crystalline system, by a discrete set $\Lattice$.  We require some uniformity as in the following definition, where $B_\rho(\x)\subset \R^d$ denotes the open ball of radius $\rho>0$ centered in $\x\in \R^d$.
\begin{dfn}
A set $\Lattice \subset \R^d$ is said to be
\emph{$r$-uniformly discrete} if  there exists $r>0$ such that $\forall \,\x \in \R^d$ the set $B_{r}(\x) \cap \Lattice$ contains at most one element.
\end{dfn} 

\noindent Obviously, a Bravais lattice $\Gamma\simeq\Z^d$ is a $r$-uniformly discrete set for a suitable $r>0$.

A central role in our analysis will be played by the concept of localization in space, hence the following Definition will be useful.
\begin{dfn}[Localization function]
We say that a continuous function \[G\colon[0,\infty)\to(0,\infty)\] is a \emph{localization function} if $\lim_{x \to\infty}G(x)=+\infty$ and there exists a constant $C_G>0$ such that
\begin{equation}
\label{eqn:GTriang}
G(\norm{\x-\y})\leq C_G \,G(\norm{\x-\z})G(\norm{\z-\y})\qquad \forall \,\x,\y,\z\in\R^d.
\end{equation}
\end{dfn}

\noindent Natural examples of localization functions are $\Loc(x) = \expo^{\alpha x}$ for some $\alpha > 0$, 
and  $\Loc(x)=\langle x \rangle^{2s}$ for some $s > 0$, where  $\langle x \rangle:= \left(1+ x^2 \right)^{\frac{1}{2}}$ as usual.

\medskip

Our aim is to investigate the relation between localization of GWBs and transport properties in non-interacting gapped quantum systems, whose dynamics is generated by a one-particle (magnetic) Schr\"odinger operator \cite{AvronHerbstSimon1981}. 
The spectral projections onto an isolated component of the spectrum of such operators provide archetypal examples of exponentially localized projections, 
defined as follows. 

\begin{dfn}[Exponentially localized projection]
	\label{dfn:ELP}
	We say that an orthogonal projection $P$ acting on $L^2(\R^d)$ is \emph{exponentially localized} if $P$ 
	is an integral  operator with a \emph{jointly continuous} integral kernel $P(\cdot\,,\,\cdot)\colon\R^d\times\R^d \to \C$ and 
	there exist two constants $C,\beta >0$ such that
	\begin{equation}
	\label{DiagLocIntKErnel}
	\left|P(\x,\y)\right|\leq C \eu^{-\beta \|\x-\y\|}\,\qquad \forall \, \x,\y \in\R^d.
	\end{equation}
\end{dfn}

The following Proposition provides examples of exponentially localized projection for a large class of $2$-dimensional quantum systems, including magnetic Schr\"odinger operators with constant magnetic field. 
\begin{prop}
\label{prop:GQS}
Let $V: \R^2 \to \R$ be in $L^2_{\rm uloc}(\R^2)$, which means that $V$ is uniformly locally square-integrable, \ie
\begin{equation}
\label{UniformL2loc}
\sup_{\x \in \R^2} \int\limits_{\|\x-\y\|\leq 1} |V (\y)|^2  d\y \, < \, \infty.
\end{equation}

\noindent Assume that the magnetic vector potential ${\bf A}: \R^2 \to \R^2$ is in $L^4_{{\rm loc}}(\R^2,\R^2)$ with distributional derivative
$\nabla \cdot {\bf A} \in L^2_{{\rm loc}}(\R^2)$. Consider the Hamiltonian operator
\begin{equation*}
H_{\bf A}:=- \half \Delta_{\bf A} + V \, , 
\end{equation*} 
where $-\Delta_{\bf A}:=\left(-\iu\nabla - \mathbf{A}\right)^2$. Then
\begin{enumerate} [label=(\roman*),ref=(\roman*)]
	\item \label{item:C1} $H_{\bf A}$ is essentially selfadjoint on $C^\infty_{\rm c}(\R^2)$. We denote its closure again by $H_{\bf A}$.
	\item \label{item:C2} The domain $\mathcal{D}_{\bf A}$ of selfadjointness of $H_{\bf A}$ is contained in $C(\R^2)$.
	\item \label{BoundedFromBelow} $H_{\bf A}$ is bounded from below.
\end{enumerate}
Moreover, assume that $\sigma(H_{\bf A})$,  the spectrum of $H_{\bf A}$, has an {\emph{isolated component}}, 
\ie there exist two non empty sets $\sigma_0,\sigma_1 \subset \R$ and $E_{\pm} \in \R\setminus\sigma(H_{\bf A})$ such that
\begin{equation}\label{gap}
\sigma(H_{\bf A})=\sigma_0 \cup \sigma_1,  \qquad \sigma_0 \subset \left(E_-,E_+\right) \: 
\text{ and }\: \left(E_-,E_+\right) \cap \sigma(H_{\bf A})= \sigma_0.
\end{equation}
Let $P_0$ be the spectral projection corresponding to $\sigma_0$. Then,
\begin{enumerate}[label=(\roman*),ref=(\roman*), resume]
 \item \label{item:FermiEL} $P_0$ is an exponentially localized projection in the sense of Definition~\ref{dfn:ELP}.
\end{enumerate}
\end{prop}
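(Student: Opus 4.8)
The statement bundles four facts about the singular magnetic Schrödinger operator $H_{\bf A}$, and I would prove them with the classical toolbox for such operators, systematically reducing to scalar ($\mathbf{A}$-free) estimates via the diamagnetic (Kato) inequality. For items \ref{item:C1} and \ref{BoundedFromBelow} I would split $V=V_+-V_-$ with $V_\pm\in L^2_{\rm uloc}(\R^2)\subset L^2_{\rm loc}(\R^2)$. For the nonnegative potential $V_+$, the essential self-adjointness of $\tfrac12(-\iu\nabla-{\bf A})^2+V_+$ on $C^\infty_{\rm c}(\R^2)$ is the Leinfelder--Simader theorem, whose hypotheses are precisely ${\bf A}\in L^4_{\rm loc}$ and $\nabla\cdot{\bf A}\in L^2_{\rm loc}$. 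It then remains to absorb $-V_-$ as a Kato--Rellich perturbation, and for this I would show that an $L^2_{\rm uloc}(\R^2)$ potential is \emph{infinitesimally} operator-bounded with respect to $-\Delta$ in dimension two: partitioning $\R^2$ into cubes of side $\ell$, the local estimate $\|\psi\|_{L^\infty(Q_\ell)}\leq C(\ell^{-1}\|\psi\|_{L^2(\tilde Q_\ell)}+\ell\|\Delta\psi\|_{L^2(\tilde Q_\ell)})$ (interior Calderón--Zygmund plus $W^{2,2}(\R^2)\hookrightarrow L^\infty$) gives, after summing and optimizing in $\ell$, the bound $\|V_-\psi\|_{L^2}\leq\varepsilon\|\Delta\psi\|_{L^2}+C_\varepsilon\|\psi\|_{L^2}$; the diamagnetic inequality transfers this to $(-\iu\nabla-{\bf A})^2$, and monotonicity in the positive potential $V_+$ transfers it further to $\tfrac12(-\iu\nabla-{\bf A})^2+V_+$. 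The same computation at the level of quadratic forms yields \ref{BoundedFromBelow}, since $\langle\psi,H_{\bf A}\psi\rangle\geq\tfrac12\|(-\iu\nabla-{\bf A})\psi\|^2-\langle\psi,V_-\psi\rangle\geq-C\|\psi\|^2$ once the relative form bound is chosen below $1/2$.

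For item \ref{item:C2} I would run an elliptic regularity bootstrap. If $\psi\in\mathcal{D}_{\bf A}$ then $H_{\bf A}\psi\in L^2$; using $-\iu\nabla\psi=(-\iu\nabla-{\bf A})\psi+{\bf A}\psi$ and the two-dimensional Sobolev embeddings one first upgrades $\psi$ to $H^1_{\rm loc}$, hence to $L^q_{\rm loc}$ for every $q<\infty$. Writing $-\Delta\psi=(-\iu\nabla-{\bf A})^2\psi-2\iu\,{\bf A}\cdot\nabla\psi-\iu(\nabla\cdot{\bf A})\psi-|{\bf A}|^2\psi$, the right-hand side lies in $L^p_{\rm loc}$ for some $p>1$ (the borderline term being ${\bf A}\cdot\nabla\psi\in L^{4/3}_{\rm loc}$), so interior Calderón--Zygmund estimates give $\psi\in W^{2,p}_{\rm loc}$, and the embedding $W^{2,p}(\R^2)\hookrightarrow C^{0,\alpha}_{\rm loc}$ valid for $p>1$ yields $\psi\in C(\R^2)$.

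For item \ref{item:FermiEL}, since $\sigma_0$ is a compact subset of the open interval $(E_-,E_+)$, I would write $P_0=g(H_{\bf A})=\tfrac{1}{2\pi\iu}\oint_{\mathcal C}(H_{\bf A}-z)^{-1}\di z$ for a contour $\mathcal C\subset(E_-,E_+)$ around $\sigma_0$ and for $g\in C^\infty_{\rm c}(\R)$ equal to $1$ on $\sigma_0$ with support in $(E_-,E_+)$. For the joint continuity of the kernel, factor $g(\lambda)=(\lambda+c)^{-1}\hat g(\lambda)$ with $\hat g\in C^\infty_{\rm c}$ and $c$ large, so that $P_0=(H_{\bf A}+c)^{-1}\hat g(H_{\bf A})$ maps $L^2$ into $\mathcal{D}_{\bf A}$ and, by the quantitative form of item \ref{item:C2}, continuously into $C^{0,\alpha}_{\rm loc}(\R^2)$; by duality $\x\mapsto P_0(\x,\cdot)\in L^2(\R^2)$ is then locally Hölder continuous, and $P_0(\x,\y)=\langle P_0(\y,\cdot),P_0(\x,\cdot)\rangle_{L^2}$ (from $P_0=P_0^2=P_0^*$) is jointly continuous. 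For the exponential decay I would run a Combes--Thomas argument: conjugating by $\eu^{\la\,{\bf e}\cdot\x}$ with ${\bf e}$ a unit vector produces $H_{\bf A}(\la)=H_{\bf A}+\la B_1({\bf e})+\la^2 B_0$ with $B_1$ relatively $H_{\bf A}$-bounded and $B_0$ bounded, so for $|\la|$ small (uniformly in ${\bf e}$) the spectral island persists and $T_\la:=\eu^{\la\,{\bf e}\cdot\x}P_0\,\eu^{-\la\,{\bf e}\cdot\x}=g(H_{\bf A}(\la))$ is bounded on $L^2$ uniformly in $\la$; the diamagnetic bound $(-\Delta_{\bf A}+C)^{-1}\colon L^2\to L^\infty$ upgrades this to a uniform $L^2\to L^\infty$ bound on $T_{\pm\la}$, whence $\eu^{\la\,{\bf e}\cdot(\x-\y)}P_0(\x,\y)=\langle T_{-\la}(\y,\cdot),T_\la(\x,\cdot)\rangle_{L^2}$ is uniformly bounded. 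Choosing ${\bf e}=(\x-\y)/\|\x-\y\|$ and optimizing $\la$ gives \eqref{DiagLocIntKErnel}.

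\textbf{Main obstacle.} The delicate part is item \ref{item:FermiEL}, namely converting the operator-theoretic Combes--Thomas decay into the pointwise kernel bound while simultaneously establishing joint continuity. This forces one to run the elliptic regularity and diamagnetic smoothing of items \ref{item:C2}--\ref{item:C1} for the \emph{non-self-adjoint} conjugated operator $H_{\bf A}(\la)$ — whose coefficients are still only in $L^4_{\rm loc}$ and which carries a bounded imaginary shift of the vector potential — with all constants uniform in $\la$ and in the direction ${\bf e}$. Checking that the diamagnetic/Kato comparison survives this complex perturbation, so that $H_{\bf A}(\la)+C$ retains the $L^2\to L^\infty$ smoothing, is the technical crux of the proof.
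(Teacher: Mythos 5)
Your proposal is correct and proves the four items with the same broad toolkit (Leinfelder--Simader, diamagnetic comparison, Combes--Thomas), but you take a genuinely different route for items \ref{item:C2} and \ref{item:FermiEL}. For \ref{item:C2}, the paper does \emph{not} run an elliptic bootstrap: it appeals to the known $L^2\to L^\infty$ boundedness of the Schr\"odinger semigroup $\eu^{-tH_{\bf A}}$ and its mapping into $C(\R^2)$ (Broderix--Hundertmark--Leschke), then transfers this to the resolvent by the subordination argument of Simon's ``Schr\"odinger semigroups'' (Corollary B.3.2 / Theorem B.3.3). Your interior $W^{2,p}_{\rm loc}$ bootstrap via the expansion $-\Delta\psi=(-\iu\nabla-{\bf A})^2\psi-2\iu\,{\bf A}\cdot\nabla\psi-\iu(\nabla\cdot{\bf A})\psi-|{\bf A}|^2\psi$ is a self-contained alternative: the diamagnetic inequality gives $|\psi|\in H^1_{\rm loc}$, hence $\psi\in L^q_{\rm loc}$ for all $q<\infty$ in $d=2$, which closes the loop (so $A\psi\in L^2_{\rm loc}$, $\psi\in H^1_{\rm loc}$, $A\cdot\nabla\psi\in L^{4/3}_{\rm loc}$, $W^{2,4/3}(\R^2)\hookrightarrow C^{0,1/2}_{\rm loc}$); it buys you an explicit local H\"older exponent, at the cost of re-deriving what the semigroup literature already packages, and of having to track these estimates again for the conjugated, non-self-adjoint $H_{\bf A}(\la)$. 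The paper avoids exactly that re-derivation by citing Cornean--Nenciu (Proposition 3.1 and Appendix A), where the Combes--Thomas bound is carried out at the level of the resolvent kernel for magnetic Schr\"odinger operators, together with the contour formula $P_0=-\tfrac{\iu}{2\pi}\oint_{\mathcal C}\di z\, z(H_{\bf A}-z)^{-2}$ for which the spectral-island assumption guarantees a uniformly gapped contour. For items \ref{item:C1} and \ref{BoundedFromBelow} the paper simply invokes Leinfelder--Simader Theorem~3 as a ``special case''; your cube decomposition with the $H^2(\R^2)\hookrightarrow L^\infty$ embedding is precisely the verification that an $L^2_{\rm uloc}(\R^2)$ potential satisfies the infinitesimal operator bound required by that theorem, so you are unpacking rather than departing from the paper's citation. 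Finally, you correctly identify the technical crux (uniform $L^2\to L^\infty$ smoothing for the conjugated operator); this is what the paper outsources to \cite{CorneanNenciu09}.
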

\noindent We refer to $P_0$ as the \emph{Fermi projection}, interpreting it as the projection onto the space of \virg{occupied} states of a system of non-interacting particles.

\begin{proof}[Sketch of the proof]
The statements \ref{item:C1} and \ref{BoundedFromBelow} are a special case of \cite[Theorem~3]{LeinfelderSimader1981}. The property \ref{item:C2} is a consequence of the fact that the Schr\"odinger semigroup $e^{-t H_A}$ is a bounded operator from $L^2(\R^2)$ to $L^\infty(\R^2)$ \cite[equation (2.40)]{BroderixHundertmarkLeschke2000} and maps $L^2(\R^2)$ into the space of continuous functions \cite[Theorem 4.1]{BroderixHundertmarkLeschke2000}. Therefore, repeating the proof of \cite[Corollary B.3.2, Theorem B.3.3]{Simon1982} one can show that, for $\lambda>0$ large enough, the resolvent $\left(H_{\bf A} + \lambda \right)^{-1}$ maps $L^2(\R^2)$ into the space of continuous functions.  
Regarding the statement \ref{item:FermiEL}, the existence and joint continuity of the integral kernel of $P_0$ is a standard result  in the theory of Schr\"odinger operators \cite[Remarks 6.2.(ii)]{BroderixHundertmarkLeschke2000}\cite{Simon1982}.
The exponential localization of the integral kernel is a consequence of the Combes--Thomas estimates on the resolvent \cite{CT73}, which can be shown by adapting the proofs of \cite[Proposition 3.1, Appendix A]{CorneanNenciu09}, coupled with the spectral gap assumption on $\sigma_0$, which allows to choose an integration contour $\mathcal{C}\subset \C$ with a uniform positive distance from $\sigma_0$, and the fact that $P_0 =-\frac{\iu}{2 \pi} \oint_{\mathcal{C}} dz \, z (H_{\bf A}-z)^{-2}$, see \cite[Appendix A.1.1]{Moscolari2018} for details.
\end{proof}


\subsection{Generalized Wannier Bases: a short review}
\label{Sec:GWB}

Following the seminal idea in \cite{NenciuNenciu1993,NenciuNenciu1998} we define generalized Wannier bases and functions as follows. 

\begin{dfn}[Generalized Wannier basis] 
\label{GWB}
Let $P$ be an orthogonal projection acting on $L^2(\R^d)$ . We say that $P$ admits a \emph{generalized Wannier basis} (GWB) if there exist a localization function $\Loc$, a discrete set $\Lattice \subset \R^d$, a  constant $m_*>0$ and a set $\{\w_{\gamma,a}\}_{\gamma \in \Lattice, 1 \leq a \leq m(\gamma)}\subset L^2(\R^d)$ with $m(\gamma)\leq m_*$ for every $\gamma \in \Lattice$, such that:
\begin{enumerate}[label=(\roman*),ref=(\roman*)]
    \item $\{\w_{\gamma,a}\}_{\gamma \in \Lattice, 1 \leq a \leq m(\gamma)}$ is an orthonormal basis
 for $\Ran P$;
    \item there exists $M<\infty$ such that every $\w_{\gamma,a}$ is \emph{$\Loc$-localized around $\gamma$}, \ie
            	\begin{equation}
            	\label{GWFLoc}
            	\int_{\R^d} |\w_{\gamma,a}(\x)|^2 \Loc(\|\x-\gamma\|) \,  d\x \leq M \quad \forall\gamma \in \Lattice,\,1 \leq        
            		a \leq  m(\gamma).
            	\end{equation}
\end{enumerate}
If the above conditions are satisfied, we say that $P$ admits a GWB which is \emph{$\Loc$-localized around the set $\Lattice$}. One also says that  $\w_{\gamma,a}$ is a \emph{generalized Wannier function} (GWF) $\Loc$-localized around
$\gamma\in\Lattice$.    
\end{dfn}

Notice that the localization center $\gamma\in\Lattice$ is far from being unique, even in the periodic case. In the physics literature, the vector $\langle \w_{\gamma,a}, \mathbf{X} \w_{\gamma,a}\rangle$ 
is called \emph{centroid} of $\w_{\gamma,a}$. Moreover, in contrast with the usual definition of Wannier functions for periodic systems, the index $m(\gamma)$ might depend on $\gamma\in \mathfrak D$. Indeed, without a periodicity assumption, one may in general expect a different number of orbitals in different lattice sites. The only constraint we require is that such a number is uniformly bounded, namely $m(\gamma) \leq m_*, \,  \forall \gamma \in \mathfrak D$. 

Finally, it might seem natural to require, in the definition of GWB, that the set $\Lattice$ is a \emph{Delone set}, 
\ie both {uniformly discrete} and {uniformly nowhere sparse}
\footnote{\label{fn:nosparse} A discrete set is said to be \emph{$R$-uniformly nowhere sparse} if there exists $R>0$ such that $\forall\, \x \in \R^d$ the set $B_{R}(\x) \cap \Lattice$ contains at least one element.}. 
However, as we do not use both these properties in our Theorem (only uniform discreteness is assumed), we preferred not to include them in the definition.

The following terminology agrees with \cite{MonacoPanatiPisanteTeufel2018}:
\begin{itemize}
\item[$\diamond$] if  \eqref{GWFLoc} holds true with $\Loc(\|\x\|)=\E^{2 \alpha \|\x\|}$, for some $\alpha > 0$,   
         we say that the GWB is \emph{exponentially localized};
\item[$\diamond$] if  \eqref{GWFLoc} holds true with $\Loc(\|\x\|)=\langle \x \rangle^{2s}$ for some $s > 0$,
we say that the GWB is \emph{s-localized}. 
\end{itemize}

In the case of an exponentially localized projection $P$, the $L^2$-localization estimate \eqref{GWFLoc} implies a $L^\infty$-estimate on any GWF $\w_{\gamma,{a}}$, as in the following
\begin{lemma}[From $L^2$-estimates to $L^\infty$-estimates]
	\label{Rem:bounds}
	Let $P$ be an exponentially localized projection acting on $L^2(\R^d)$.  
		Assume that $P$ admits a GWB $\{\w_{\gamma,a}\}_{\gamma \in \Lattice, 1 \leq a \leq m(\gamma)}$ 
		with localization function  $G(\|\x\|) \leq C_1 \eu^{\lambda \|\x\|}$, with $C_1 > 0$ and $\lambda < 2\beta$ for $\beta$  as in \eqref{DiagLocIntKErnel}. 
		Then, there exists a constant $K>0$, independent of $\gamma \in \Lattice$, such that each GWF $\w_{\gamma,{a}}$ satisfies
	\begin{equation}
	\label{LinftyEstimate}
	\left|\w_{\gamma,{a}} (\x) \right| \leq K \,  G(\|\x-\gamma\|)^{-1/2}   
	\quad \forall \, \x \in \R^d, \,\, \forall \, \gamma \in \Lattice.
	\end{equation}
\end{lemma}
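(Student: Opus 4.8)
The plan is to exploit the fact that each Wannier function $\w_{\gamma,a}$ lies in $\Ran P$, so that the reproducing-type identity $\w_{\gamma,a}(\x) = (P\w_{\gamma,a})(\x) = \int_{\R^d} P(\x,\y)\, \w_{\gamma,a}(\y)\, \D\y$ holds pointwise (the integral kernel being jointly continuous and the exponential bound \eqref{DiagLocIntKErnel} guaranteeing absolute convergence for $\w_{\gamma,a} \in L^2$). First I would apply the Cauchy--Schwarz inequality to this integral against the weight $\Loc(\|\y-\gamma\|)^{1/2}$, writing the integrand as $\big(P(\x,\y)\,\Loc(\|\y-\gamma\|)^{1/2}\big)\cdot\big(\Loc(\|\y-\gamma\|)^{-1/2}\w_{\gamma,a}(\y)\big)$. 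The second factor is controlled in $L^2$: by hypothesis \eqref{GWFLoc}, $\int |\w_{\gamma,a}(\y)|^2 \Loc(\|\y-\gamma\|)^{-1}\,\D\y \le \int |\w_{\gamma,a}(\y)|^2\,\D\y = 1$ since $\Loc \ge $ its infimum, but more to the point we can keep the weight $\Loc$ and use \eqref{GWFLoc} directly — actually the cleanest choice is to pair $P(\x,\y)$ with $\w_{\gamma,a}(\y)$ weighted so that one factor carries $\Loc(\|\y-\gamma\|)^{1/2}$ and is bounded by \eqref{GWFLoc}, and the other carries $P(\x,\y)\Loc(\|\y-\gamma\|)^{-1/2}$ whose $L^2_\y$-norm must be estimated.

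The second step is therefore to bound $\int_{\R^d} |P(\x,\y)|^2\, \Loc(\|\y-\gamma\|)^{-1}\, \D\y$ by a constant times $\Loc(\|\x-\gamma\|)^{-1}$. Here I insert \eqref{DiagLocIntKErnel}, $|P(\x,\y)|^2 \le C^2 \eu^{-2\beta\|\x-\y\|}$, and invoke the sub-multiplicativity \eqref{eqn:GTriang} of the localization function in the form $\Loc(\|\x-\gamma\|) \le C_G\, \Loc(\|\x-\y\|)\,\Loc(\|\y-\gamma\|)$, i.e. $\Loc(\|\y-\gamma\|)^{-1} \le C_G\, \Loc(\|\x-\y\|)\,\Loc(\|\x-\gamma\|)^{-1}$. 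This pulls the $\x$-dependence out as $\Loc(\|\x-\gamma\|)^{-1}$ and leaves $\int_{\R^d} \eu^{-2\beta\|\x-\y\|}\,\Loc(\|\x-\y\|)\,\D\y = \int_{\R^d} \eu^{-2\beta\|\z\|}\,\Loc(\|\z\|)\,\D\z$ after the change of variables $\z = \x - \y$. The hypothesis $\Loc(\|\z\|) \le C_1 \eu^{\lambda\|\z\|}$ with $\lambda < 2\beta$ makes this integral converge to a finite constant depending only on $C_1,\lambda,\beta,d$; call its value (times the constants $C^2, C_G$) $K_0$. Combining, $|\w_{\gamma,a}(\x)|^2 \le K_0\, \Loc(\|\x-\gamma\|)^{-1}\cdot M$, and taking square roots gives \eqref{LinftyEstimate} with $K = (K_0 M)^{1/2}$, manifestly independent of $\gamma$ (and of $a$), since all constants entering are.

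The only mild subtlety — and the step I would be most careful about — is the justification that the pointwise identity $\w_{\gamma,a} = P\w_{\gamma,a}$ actually produces a continuous representative and that the integral defining $(P\w_{\gamma,a})(\x)$ converges absolutely for every $\x$; this is where joint continuity of $P(\cdot,\cdot)$ together with the Gaussian-beating-polynomial (or exponential) decay is used, and it is essentially the same estimate as in Step 2 with the weight removed. Everything else is a two-line Cauchy--Schwarz plus the sub-multiplicativity axiom \eqref{eqn:GTriang}; there is no real obstacle, only bookkeeping of constants. One should also note that the condition $\lambda < 2\beta$ is sharp for this argument precisely because the convolution integral $\int \eu^{-2\beta\|\z\|}\Loc(\|\z\|)\,\D\z$ diverges otherwise, which is why the lemma is stated with that hypothesis rather than for an arbitrary exponential localization function.
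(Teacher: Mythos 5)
Your proof is correct and takes essentially the same route as the paper's: reproducing identity $\w_{\gamma,a}=P\w_{\gamma,a}$, Cauchy--Schwarz with a $G$-weight, sub-multiplicativity \eqref{eqn:GTriang}, the GWB bound $M$ from \eqref{GWFLoc}, and the hypothesis $\lambda<2\beta$ to make the convolution integral converge. The only (cosmetic) difference is the order of operations: the paper multiplies through by $G(\|\x-\gamma\|)^{1/2}$ first, applies \eqref{eqn:GTriang} to split that prefactor, and then Cauchy--Schwarz; you apply Cauchy--Schwarz first with weights $G(\|\y-\gamma\|)^{\mp 1/2}$ and then invoke \eqref{eqn:GTriang} inside the remaining integral to extract $G(\|\x-\gamma\|)^{-1}$ --- both yield identical estimates with identical constants.
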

\begin{proof}
	Using the fact that $\w_{\gamma,{a}}=P\w_{\gamma,{a}}$, one has
	\begin{equation*}
	\begin{aligned}
	& \left| G(\|\x-\gamma\|)^{1/2} \w_{\gamma,{a}} (\x) \right| \leq \int_{\R^d} d \y \, G(\|\x-\gamma\|)^{1/2}  \,  \left| P(\x,\y) \right| \left| \w_{\gamma,{a}}(\y)\right|
	\\
	&\leq C_G^{1/2}\left( \int_{\R^d} d \y \, G(\|\x-\y\|)  \,  \left| P(\x,\y) \right|^2 \right)^{1/2} \left( \int_{\R^d} d\y \,G(\|\gamma-\y\|) \,\left| \w_{\gamma,{a}}(\y)\right|^2 \right)^{1/2},
	\end{aligned} 
	\end{equation*}
	where in the second and third inequality we have used property \eqref{eqn:GTriang} and the Cauchy--Schwarz inequality, respectively.
	In view of \eqref{DiagLocIntKErnel} and of the hypothesis on $G$, there exists a constant $K>0$,
	independent of $\gamma \in \Lattice$, such that
	$$
	C^{1/2}_G \sup_{\x \in \R^d} \left( \int_{\R^d} d \y \, G(\|\x-\y\|)  \,  \left| P(\x,\y) \right|^2 \right)^{1/2} \left( \int_{\R^d} d \y \, G(\|\gamma-\y\|) \,\left| \w_{\gamma,{a}}(\y)\right|^2 \right)^{1/2}  \leq K\,.
	$$
	Therefore, \eqref{LinftyEstimate} is proved.
\end{proof}

\medskip 
Whenever the projection $P$ admits a GWB, it can be written as  (using the Dirac  notation)
\begin{equation}
\label{braket}
P\;=\; \sum_{\gamma \in \Lattice} \sum_{1\leq a \leq m(\ga)} \ket{\w_{\gamma,a}} \bra{\w_{\gamma,a}}
\end{equation}
where the series is understood as the strong limit of the finite sums of the projections on the one-dimensional spaces spanned by each GWF. Notice that the strong limit is independent of the ordering of the series. Moreover, if the set $\Lattice$ is $r$-uniformly discrete, it is convenient to rearrange the above series as  
$$
P \;\varphi=\;\lim_{L\to\infty} \sum_{\gamma \in \Lattice\cap\Lambda_L} \sum_{1\leq a \leq m(\ga)}  \bra{\w_{\gamma,a}}\varphi\rangle \w_{\gamma,a},\quad\forall\,\varphi\in L^2(\R^d),
$$
with $\Lambda_L = [-L,L]^d$. 
Furthermore, if $P= P_0$ is the Fermi projection of an operator satisfying the hypotheses 
of  Proposition~\ref{prop:GQS} and admitting a GWB that is exponentially localized or $s$-localized, with $s>1$, the equality 
\begin{equation} \label{braket2}
P_0(\x,\y)\;=\; \sum_{\gamma \in \Lattice} \sum_{1\leq a \leq m(\ga)} \w_{\gamma,a}(\x) \overline{\w_{\gamma,a}(\y)}\, , \qquad \forall\, \x,\y \in \R^2 ,
\end{equation}
holds true pointwise, since $\w_{\gamma,a}\in \Ran P_0\subset \mathcal{D}_{\bf A}\subset C(\R^2)$ by Proposition~\ref{prop:GQS}\ref{item:C2} and $\w_{\gamma,{a}}$ satisfies \eqref{LinftyEstimate}.

We emphasize a crucial point: the fact that $P_0$ is an exponentially localized projection does \emph{not} imply that the GWFs appearing in \eqref{braket2} are themselves exponentially localized, as it is explained in 
\cite{MonacoPanatiPisanteTeufel2018}   for the periodic case. While the exponential localization of $P_0$ is a 
mere consequence of the gap condition, via Combes--Thomas estimates, the existence of an exponentially localized 
Wannier basis is related - in the periodic case - to the Chern triviality of the vector bundle associated with $P_0$ \cite{Panati2007,BrouderPanati2007,FiMoPa_2,CoHeNe2015,Kuchment2016,CorneanMonacoMoscolari2019}. 
Notice, moreover, that not every orthonormal basis is able to capture the topological properties of the former vector bundle. Indeed, if one relaxes the definition of GWB, allowing \eg an orthonormal basis whose elements are localized around
circles of increasing radii as in \cite{Prodan}, it is always possible to find an exponentially localized orthonormal basis such that \eqref{braket2} holds true, see \cite{MoscolariPanati}. In this sense, the definition of GWB is a good compromise 
between generality and ability to \virg{read} the vanishing of the Chern character.

\goodbreak

Whenever the Hamiltonian operator is $\Gamma$-periodic and satisfies suitable regularity conditions, see \eg \cite[Remark 3.2]{MonacoPanatiPisanteTeufel2018}, the Fermi projection $P_0$ commutes with a unitary representation of the group $\Gamma \simeq \Z^d$, denoted by  $\left\{T_\gamma\right\}_{\ga \in \Gamma}$, 
and one constructs via Bloch-Floquet theory  the usual {\it composite Wannier basis} for $P_0$, denoted by $\{T_\gamma w_{0,a} \}_{\gamma \in \Gamma, 1 \leq a \leq m}$, for some fixed integer $m \geq 1$ (equal to the number of Bloch bands associated with $P_0$). 
It is easy to see that such composite Wannier basis  satisfies Definition~\ref{GWB}, so it is indeed a GWB. 
According to the value of the first Chern number(s) of $P$, such a GWB can be chosen 
exponentially localized if $c_1(P) = 0$,  or $s$-localized for any $s <1$ (but not for $s=1$!) if instead
$c_1(P) \neq 0$ \cite{MonacoPanatiPisanteTeufel2018}.

It is of relevance to note that there are non-periodic systems in which is not possible to construct a composite Wannier basis - in view of the lack of periodicity - but is still possible to have a generalized Wannier basis. In this direction, 
we review here some examples
\footnote{See \eqref{UniformL2loc} for the definition of $L^2_{\rm{uloc}}(\R^2)$, which can be easily generalized to any dimension.}:

\begin{example}[Generic 1-dimensional insulators]
\label{exampe:PXP}
Consider the one-dimensional Schr\"odinger operator
\begin{equation}
\label{eqn:1DH}
H=-\frac{\mathrm d^2\phantom{x}}{{\mathrm d} x^2} + V \qquad \text{ with } V \in L^2_{\rm uloc} (\R) .
\end{equation}
$H$ is a selfadjoint operator in $L^2(\R)$, bounded from below. Assume that the spectrum of $H$ has an isolated component 
$\sigma_0$, such that the range of the spectral projection $P_0$ corresponding to $\sigma_0$ 
has infinite dimension. Inspired by Kivelson \cite{Kivelson1982}, the authors of  \cite{NenciuNenciu1998} consider the reduced position operator $P_0XP_0$ with domain 
$\mathcal{D}(P_0XP_0)=\mathcal{D}(X)\cap\Ran P_0$ and prove that:
\begin{enumerate}[label=(\roman*),ref=(\roman*)]
	\item the resolvent of $P_0XP_0$ is a compact operator, hence the spectrum of $P_0XP_0$ is purely discrete.
	The corresponding eigenvectors $\psi_{\gamma,a} \in L^2(\R)$ satisfy
	$$
	P_0XP_0 \psi_{\gamma,a}=\gamma \psi_{\gamma,a} \qquad \forall\,\gamma \in \sigma(P_0XP_0) ,\,1 \leq a \leq m(\gamma),
	$$
	where $m(\gamma)$ is the degeneracy index of the eigenvalue $\gamma$;
	\item there exist two constants $0<\alpha, M <\infty$ such that 
	$$
	\int_{\R} |\w_{\gamma,a}(\x)|^2 \eu^{ \alpha \|\x-\gamma\|} \,  d\x \leq M \quad \forall\,\gamma \in \sigma(P_0XP_0) ,\,1 \leq a \leq m(\gamma). 
	$$
\end{enumerate}

Setting  $\Lattice_0 = \sigma(P_0XP_0)$, the orthonormal basis 
$\left\{\w_{\gamma,{a}}\right\}_{\gamma \in \Lattice_0, 1 \leq a \leq m(\gamma) }$
is a GWB \emph{in the sense of Nenciu--Nenciu},  as defined in \cite{NenciuNenciu1998}, 
exponentially localized around the discrete set  $\Lattice_0$.
To prove that  $\left\{\w_{\gamma,{a}}\right\}$ is a GWB according to our (stronger) Definition \ref{GWB}, 
one has to show that $\exists \, m_* : m(\gamma)\leq m_*$ for all $\gamma \in \Lattice$, which is not known in general.
Whenever the Hamiltonian is periodic, the generalized Wannier basis constructed with the previous strategy \cite{NenciuNenciu1998} coincides with a usual composite Wannier basis \cite{Costa2014, Moscolari2018}.
Furthermore, in \cite{CorneanNenciuNenciu2008} it has been proved that the number of eigenvalues of $P_0XP_0$, counted with their multiplicity, contained in an interval of length $L$, grows at most linearly with $L$. However, this result does not imply that the set $\sigma(P_0XP_0)$ is $r$-uniformly discrete or $R$-uniformly nowhere sparse
 for some $r,R>0$.  The proof of the latter claims is, to our knowledge, still an open problem.
\end{example}

The previous construction for $1$-dimensional systems does not directly generalize to higher dimensions, essentially 
for two reasons: 
\renewcommand{\labelenumi}{{\rm(\roman{enumi})}}
\begin{enumerate} 
\item  generically, the compact resolvent property of $P_0XP_0$ holds true only in $L^2(\R)$, as one can easily see for $d=2$ when $P_0$ is $\Z^2$-periodic;
\item in general $P_0 X_i P_0$ and $P_0 X_jP_0$ do not commute for $i \neq j$, so the attempt to simultaneously diagonalize them fails. It is worth noticing that the topological and transport properties of the system are encoded exactly in the commutator between $P_0 X_1 P_0$ and $P_0 X_2 P_0$, see Section \ref{subsec:ChernMarker} and specifically equation \eqref{Chern_commutator}. The relation with the theory of almost-commuting operators has been explored in  \cite{HastingsLoring10}.
\end{enumerate} 
Some proposals to circumvent these difficulties appeared recently in \cite{StubbsWatsonLu20,StubbsWatsonLu20II}, where the authors show that, under the additional crucial assumptions of uniform spectral gaps for the spectrum of the operator $PX_1P$, it is possible to prove the existence of an exponentially decaying GWB for the projection $P$. Proving that $PX_1P$ satisfies such uniform spectral gaps hypothesis is still an open problem, but in \cite{StubbsWatsonLu20} the authors present numerical simulations showing that in explicit tight-binding models the spectrum of $PX_1P$ have spectral gaps only when the projection $P$ is Chern trivial, which is in accordance to Conjecture \ref{ConjectureLocDic}.

Despite the difficulties to deal with \emph{generic} non-periodic systems for $d>1$, 
there are several specific $d$-dimensional models for which the existence of a well-localized GWB has been proved, as in the following examples.

\begin{example}[$d$-dimensional insulators with weak disorder]
\label{example:Impurties}$ $\newline 
Generalized Wannier bases can be useful in the analysis of periodic systems perturbed by impurities and weak disorder. Given $\lambda \in \R$ and a Bravais lattice $\Gamma \subset \R^d$, $d\leq 3$, we  
consider the Hamiltonian operator, acting in $L^2(\R^d)$,
$$
H_{\lambda}=-\Delta + V_{\Gamma} + \lambda W
$$
where $V_\Gamma \in L^2_{\rm uloc}(\R^d)$ is $\Gamma$-periodic. Assume that $W\in L^2_{\rm uloc}(\R^d)$, so that $H_\lambda$ is an entire family of type $A$ in the sense of Kato.  The potential $V_\Gamma$ models the periodic background of the crystalline insulator, while $W$ models either a sum of localized impurities or a delocalized disorder.  \\
Assume that the spectrum of $H_0$ has an isolated component $\sigma_0$. Since $H_0$ is time reversal symmetric, the projection $P_0$ onto the isolated component $\sigma_0$ admits a Wannier basis $\left\{\w_{\gamma,a}:=\w_{0,{a}}(\cdot -\gamma )\right\}_{\gamma \in \Gamma, 1\leq a \leq m}$, for some $m>0$, where each Wannier function $\w_{\gamma,{a}}$ is exponentially localized around $\gamma$ in the sense of Definition~\ref{GWFLoc}. By standard perturbation theory \cite[Remark VII.2.3, p. 379]{Kato} one can prove that there exists a $\lambda_0>0$ small enough such that, for $|\lambda|<\lambda_0$, the spectrum of $H_\lambda$ has an isolated component $\sigma_{\lambda}$ that varies continuously with $\lambda$ in the Hausdorff distance. Denote by $P_\lambda$ the spectral projection onto $\sigma_\la$. \\ In \cite{NenciuNenciu1993} it is shown how to transport an exponentially localized GWB from the range of $P_0$ to the range of $P_\la$. Let us review here the main steps of the proof rewritten in our setting. First of all, since $H_\lambda$ is an analytic family of type A, the family of projections is analytic in $\lambda$ \cite[Theorem VII.3.1.7]{Kato}, in particular $\|P_\lambda -P_{\lambda'}\| \leq C |\lambda-\lambda'|$ for some positive constant $C$. It follows that, for $|\lambda-\lambda'|$ small enough, there exists a unitary operator which intertwines the two projections and, as a consequence, it unitarily maps an orthonormal basis for the range of $P_\lambda$ into an orthonormal basis for the range of $P_{\lambda'}$. Furthermore, in view of the hypothesis on the Hamiltonian and the gap condition, one can prove that the projections, $P_\lambda$ and $P_{\lambda'}$, are both exponentially localized projection in the sense of Definition~\ref{dfn:ELP}. In addition, using again the Riesz formula, together with the Combes--Thomas rotation and the relative smallness of $W$,
one gets that, for some $\delta>0$ small enough, $\sup_{{\bf{a}} \in \R^d}\|\expo^{-\delta\langle\cdot-{\bf{a}\rangle}} \left(P_{\lambda}-P_{\lambda'}\right) \expo^{\delta\langle\cdot-{\bf{a}\rangle}}\|\leq C |\lambda-\lambda'|$ which implies that the intertwining unitary, explicitly given by the Kato--Nagy formula, preserves the exponential localization, see \cite{NenciuNenciu1993}.
Therefore, starting from an exponentially localized GWB for $P_0$ and iterating the unitary transport a finite number of times, one obtains a GWB for every
$P_\lambda$, with $|\lambda|<\lambda_0$, which is exponentially localized around the same discrete lattice $\Gamma$ as the original one.

Notice that this argument relies on the fact that $P_\lambda$ is a spectral projection onto an isolated component of the spectrum, namely we are assuming that the Fermi energy lies always in a spectral gap. In the setting of random Schr\"odinger operator it is usually considered also the case of the Fermi energy lying in a region of mobility gap. In such a situation, even though the integral kernel of the Fermi projection is not exponentially localized, it might still be possible to construct special orthonormal basis that are localized in space, see for example \cite{GerminetKleinSchenker2007} and references therein. 
\end{example}

\begin{example}[Deformed $d$-dimensional insulators: the Gubanov model] $\,$\newline
\label{example:DeformedTInsulator} 
	A class of examples of non-periodic gapped quantum systems that admit an exponentially localized GWB is provided 
	by a Schr\"odinger operator modeling the deformation of a periodic $d$-dimensional insulator. 
	We recall the definition of this model, following \cite{DeMonvelNenciuNenciu1995}, 
	where it is called \emph{quasi crystalline model of Gubanov}. 
	
	\noindent The Hamiltonian $H_0$ describing the undeformed system, acting in $L^2(\R^d)$ for $d \leq 3$, 
	is given by
	$$
	H_0=-\Delta + V   \qquad  \text{ with } V \in L^2_{\rm uloc}(\R^d). 
	$$
	We assume that the spectrum of $H_0$ has an isolated component $\sigma_0$, and denote by $P_0$
	the corresponding spectral projection. In order to describe the deformation of the system, we consider 
	$g \in C^2(\R^d,\R^d)$ and $\Omega \subset \R^d$ and set
	$$
	\begin{aligned}
	& b(g;\x) = \max_{1\leq i,j,k\leq d} \left[|\partial_{i}\partial_{j}g_k(\x)|\, , |\partial_{i}g_k(\x)|  \right] \\
	& b(g;\Omega)=\sup_{\x \in \Omega} b(g;\x) .
	\end{aligned} 
	$$
	Let $V_g$ be the potential given by $V_g(\x)=V(\x+g(\x))$, for all $\x \in \R^d$. Then the Hamiltonian of the deformed system is  $H_g=-\Delta + V_g$. \\
	Assume that $b(g,\R^d)=\xi < + \infty$ and that $P_0$ admits a GWB which is exponentially localized around some $r$-uniformly discrete set 
	$\Lattice$. A consequence of \cite[Proposition~4]{DeMonvelNenciuNenciu1995} is that, for $\xi$ small enough, the spectrum of $H_g$ has an isolated component $\sigma_{g}$ that varies continuously with $\xi$ in the Hausdorff distance. The corresponding spectral projection  $P_g$ of $H_g$ is not norm continuous  with respect to $\xi$. However if, for $\xi$ small enough, one defines the unitary operator \cite{DeMonvelNenciuNenciu1995} $(Y\psi)(\x)= \left(\det\left(|\partial_{j}(\x+g(\x))_i|\right)\right)^{1/2} \psi(\x+g(\x))$ for every $\psi \in L^2(\R^2)$, then $YP_gY^*$ depends continuously on $\xi$, 
	hence  $\norm{YP_gY^* - P_0} < 1$  for $\xi$ sufficiently small. Therefore, by repeating the steps described in Example \ref{example:Impurties} one can unitarily transport the GWB from the range of $P_0$ to the range of $YP_gY^*$, without spoiling the localization properties. Let $\{\w_{\gamma,a}\}_{\gamma \in \Lattice, 1 \leq a \leq m(\gamma)}$ be the GWB of $YP_gY^*$. Exploiting the explicit expression of the operator $Y^*$ and the fact that the derivatives of $g$ are uniformly bounded by $\xi$, one gets that $\{\varphi_{\widetilde{\gamma},a}:=Y^* \w_{\gamma,a}\}_{\widetilde{\gamma} \in \widetilde{\Lattice}, 1 \leq a \leq m(\widetilde{\gamma})}$ is an exponentially localized GWB for the range of $P_g$, where $\widetilde{\Lattice}:=\{\x \in \R^d \; |\; \x= \gamma+g(\gamma), \gamma \in \Lattice\}$ is an $r'$-uniformly discrete set and $m(\widetilde{\gamma})=m(\gamma)$. 
    \end{example}

\begin{example}[Magnetic perturbations of Chern-trivial 2D insulators]
\label{example:MGWB}$ $\newline 
Generalized Wannier functions naturally appear when a $2$-dimensional system is subjected to a constant magnetic field whose flux through the periodicity cell does not necessarily satisfies a commensurability condition with respect to the quantum of magnetic flux. While it is not possible, in such a case, to rely on any type of magnetic Bloch-Floquet transform, it is still possible to exploit the covariance with respect to magnetic translations. This problem has been tackled in \cite{CoHeNe2015} and \cite{CorneanMonacoMoscolari2019}. 
Consider a $\Z^2$-periodic insulator modelled by a Bloch-Landau Hamiltonian, acting in $L^2(\R^2)$, that is  
\begin{equation}
\label{Bloch-Landau}
H_{\epsilon}=\left( -\iu\nabla - \mathbf{A}_{\Z^2} -b_0\mathbf{A}-\epsilon\mathbf{A} \right)^2 + V_{\Z^2}
\end{equation}
where $V_{\Z^2}$ is a smooth $\Z^2$-periodic scalar potential, $\mathbf{A}_{\Z^2}$ is a smooth $\Z^2$-periodic vector potential, $\epsilon \in \R$, $b_0 \in 2\pi \mathbb{Q}$ and $\mathbf{A}(\x)=\frac{1}{2}\left(-x_2,x_1\right)$ is the magnetic potential of a constant magnetic field in the symmetric gauge. Assume that the spectrum of $H_0$ has 
an isolated component $\sigma_0$ and let $P_0$ be the corresponding spectral projection. 
Then, for $|\epsilon|$ small enough, the spectrum of $H_\epsilon$ has an isolated component $\sigma_{\epsilon}$ which varies continuously with $\epsilon$ in the Hausdorff distance. If $P_0$ has a vanishing Chern number, then $P_0$ admits a (magnetic) Wannier basis $\left\{\w_{\gamma,{a}}\right\}_{\gamma \in \Z^2, 1\leq a \leq m}$, for some $m>0$, where each Wannier function $\w_{\gamma,{a}}$ is exponentially localized around $\gamma$.  
A consequence of the results in \cite{CoHeNe2015} is that, if $|\epsilon|$ is small enough, then $P_\epsilon$ admits a GWB exponentially localized around the same lattice $\Z^2$. Notice that the \virg{magnetic} GWB for the perturbed projection $P_\epsilon$ present an \emph{almost-ladder structure}, in the sense that it can be written just using a finite set of vectors together with magnetic translation and a position dependent phase, as explained in \cite{CorneanMonacoMoscolari2019}, where also a constructive algorithm for the GWB is provided.
\end{example}

\newpage

\subsection{A topological marker in position space} $ $\newline
\label{subsec:ChernMarker}

The TKNN approach to the quantum Hall effect  \cite{TKNN82} established that the Hall conductivity, 
as given by the Kubo formula for linear response, is always an integer in units of  $\frac{e^2}{h} = \frac{1}{2\pi}$.  
Shortly after, the topological origin of such an integer has been recognized \cite{AvronSeilerSimon1983,Simon1983}, thus establishing a \emph{Transport-Topology Correspondence}: 
for gapped periodic magnetic systems, the Hall conductivity equals (up to a universal constant) 
the Chern number $c_1(P)$ of the (magnetic) Bloch bundle corresponding to the Fermi projection. 
Later, Haldane noticed that such a correspondence and the existence of a non-trivial topology only
rely on the breaking of TRS, thus paving the way to the flourishing new field of topological insulators 
\cite{Haldane88, HasanKane}. 

The former Transport-Topology Correspondence for periodic systems  
has been later extended to non-periodic models, either by methods from non-commutative geometry \cite{Bellissard1986,NakamuraBellissard,BellissardElstSchulz-Baldes1994}, or by using the index of a pair of projections \cite{Bellissard1986,AvronSeilerSimon1994}.  As mentioned in the Introduction, in the non-periodic setting the Chern number 
is replaced by the \crucial{Chern character} $C(P)$, given by formula \eqref{Chern_marker_intro} \cite{Bellissard1986,BellissardElstSchulz-Baldes1994}.  Its main advantage is that it provides a topological marker defined \emph{in position space}, with no reference to a quasi-momentum space which is available only in a periodic setting.  

We emphasize that the Chern character can be defined without any ergodicity or covariance assumption on $P$, as in the following

\begin{dfn}[Chern character]
\label{def:Chernmarker}
Let $P$ be an orthogonal projection in $L^2(\R^2)$. Let $X_1$ and $X_2$ be the multiplication operators by the respective component of the position operator, \ie $(X_i \w)(\x)=x_i \w(\x)$, for $\w \in \mathcal{D}(X_i)$. 
Given $L>1$ and $\Lambda_L:= [-L,L]^2$, we denote by $\chi_{\Lambda_L}$ the characteristic function of the set 
$\Lambda_L$. Assume that $\left[X_1, P \right], \left[X_2,P \right]$  uniquely extend to bounded operators and $\chi_{\Lambda_L} P \left[\left[X_1, P \right],\left[X_2,P \right]\right]P\chi_{\Lambda_L}$ is a trace class operator for every $L>1$. \newline
Under this assumptions, the \emph{Chern character} of $P$ is defined by the following trace per unit volume
	\begin{equation}
	\begin{aligned}
	\label{ChernMarker}
	C(P)&:= \lim\limits_{L \to \infty} \frac{2\pi}{4L^2} \Tr \left(\iu\chi_{\Lambda_L} 
	P \Big[ \left[X_1, P \right],\left[X_2,P \right] \Big] P \chi_{\Lambda_L}\right)
	\end{aligned}
	\end{equation}
whenever the limit exists.
\end{dfn}

Formula \eqref{ChernMarker} coincides up to a universal constant with the Hall conductivity in gapped systems, in the linear response regime, provided Kubo formula holds true. For the validity of the latter, 
see  the recent papers \cite{MonacoTeufel, Teufel, Marcelli2022, MarcelliMonaco2021} and references therein; 
concerning the Kubo formula for \emph{spin} conductivity, see \cite{MarcelliPanatiTauber, MarcelliPanatiTeufel} and references therein. 

The limit in \eqref{ChernMarker} equals $2\pi$ times the trace per unit volume of the operator 
\begin{equation} \label{frakC_P}
\mathfrak{C}_{P}:=\iu P \left[\left[X_1, P \right],\left[X_2,P \right]\right]P, 
\end{equation}
hence it agrees with formula \eqref{Chern_marker_intro} in the Introduction. It is interesting to notice that it is possible to rewrite \eqref{frakC_P} in terms of commutators of the so called \emph{reduced position operators}. { Let $P$ be an exponentially localized projection} and $\widetilde{X}_j\;:=\; P X_j P$ be the reduced position operator in direction $j \in \{1,2\}$. 
A direct computation\footnote{Notice that all the terms involved in the direct computation are trace class operators, as it is proved in Proposition \ref{TraceClassProof}.}, exploiting only $P^2=P$ and $\left[X_1,X_2\right]\;=\;0$, yields
\begin{equation}
\label{eq:fromCherntoPXPYP}
\chi_{\Lambda_L} \mathfrak{C}_P=\chi_{\Lambda_L}\iu P X_1 P X_2 P - \chi_{\Lambda_L} \iu P X_2 P X_1 P \;=\; \chi_{\Lambda_L} \iu \big[\widetilde{X}_1,\widetilde{X}_2\big],
\end{equation} 
so that 
\begin{equation} \label{Chern_commutator}
C(P) = 2\pi \, \Tuv\left(\iu  \big[\widetilde{X}_1,\widetilde{X}_2\big]  \right), 
\end{equation}
where $\Tuv(\cdot)$ denotes the trace per unit volume, which is \emph{conditionally cyclic}
\footnote{Here, we use the adverb \virg{conditionally} to stress the fact that $\Tuv(\cdot)$ is cyclic under some additional conditions, \eg the operators involved in its arguments are periodic. Indeed, even if the Chern character can be expressed as the trace per unit volume of the commutator between the reduced position operators $\widetilde{X}_1$ and $\widetilde{X}_2$, it may very well be non-vanishing (the reduced position operators are not periodic).}
.
Notice that the operator $\big[\widetilde{X}_1,\widetilde{X}_2\big]$ is densely defined since the projection $P$ maps 
pointwise exponentially decaying functions, with exponential decay less than $\beta$ in \eqref{DiagLocIntKErnel}, into pointwise exponentially decaying functions.

Definition~\ref{def:Chernmarker} contains two relevant constraints: the trace class condition of 
$\chi_{\Lambda_L}\mathfrak{C}_{P}\chi_{\Lambda_L}$ and the existence of the limit of \eqref{ChernMarker}, which are not trivial when $P$ is a generic orthogonal projection. Whenever the orthogonal projection is exponentially localized and time reversal symmetric, the Chern character vanishes, independently of any periodicity hypothesis, \cf \cite[Theorem 3.9]{AvronSeilerSimon1994}, as detailed in the following

\begin{prop}[Chern character vanishes under TRS]
\label{Prop:TRSChern}
Let $P$ be an exponentially localized projection acting on $L^2(\R^2)$. Let time-reversal symmetry be encoded by\footnote{In our setting, a canonical time-reversal operator is simply given by the complex conjugation operator.} an antiunitary operator $\Theta$ such that $\Theta^2=\pm\Id$ and $[\Theta, X_i]=0$ for any $i\in \{1,2\}$. If $P$ is time-reversal symmetric in the sense that $\Theta P \Theta^{-1}= P$, then $C(P)=0$.
\end{prop}
\begin{proof}[Proof]
First, notice that by Proposition \ref{TraceClassProof} below, the sequence of traces defining the Chern character is well defined. Then, by using that $\Tr(A)=\overline{\Tr(\Theta \,A\, \Theta^{-1})}$ for every trace class operator $A$ and exploiting the time reversal symmetry of $P$, we have the chain of equalities
\begin{align*}
&\iu \Tr \left(\chi_{\Lambda_L} P \left[ \left[X_1, P \right],\left[X_2,P \right] \right] P \chi_{\Lambda_L}\right)=\iu\, \overline{\Tr \left(\Theta\chi_{\Lambda_L} P \left[ \left[X_1, P \right],\left[X_2,P \right] \right] P \chi_{\Lambda_L}\Theta^{-1}\right)}\\
&=\iu\, \Tr \left({\left(\chi_{\Lambda_L} P \left[ \left[X_1, P \right],\left[X_2,P \right] \right] P \chi_{\Lambda_L}\right)}^*\right)\\
&=-\iu\, \Tr \left(\chi_{\Lambda_L} P \left[ \left[X_1, P \right],\left[X_2,P \right] \right] P \chi_{\Lambda_L}\right)
\end{align*}
which implies that the Chern character is zero.
\end{proof}

Notice that the exponential localization hypothesis in Proposition \ref{Prop:TRSChern} is only used to justify the formal computations with the double commutator formula; however, a sufficiently fast polynomial decay of the integral kernel 
would be sufficient to prove the statement.

Even though  the limit \eqref{ChernMarker} defining the Chern character might not exist for generic orthogonal projections, this is never the case when $P$ is an exponentially localized projection. Moreover, in such situation, the Chern character is 
an integer. This fact has first been proved in \cite[Appendix, Proposition 3 and Remark 3]{ElgartGrafSchenker} in the discrete setting and using a different terminology
\footnote{Notice that in \cite{ElgartGrafSchenker} the authors focus on the Hall conductance, which however for $d=2$, 
as in our setting, equals the Hall conductivity and hence is proportional to the Chern character.}, covering also the more general case of a mobility gap. The proof in \cite{ElgartGrafSchenker} is based on the index of pair of projections defined in \cite{Bellissard1986,NakamuraBellissard,AvronSeilerSimon1994}, and here we adapt their argument to our setting,  \ie exponentially localized projections on the continuum. We notice that the profound reason for the quantization of such index can be traced back to the Fedosov formula for the index of elliptic operators \cite{Fedosov, Hormander}. Before formulating the results, let us briefly remind the definition of the index of pair of projections. 
		
		Consider the unitary multiplication operator $U$, defined by $(U\psi)(\x) =U(\x) \psi(\x)$ with 
		\begin{equation}
		\label{eq:U}
		U(\x):= \left\{ 
		\begin{aligned} &\frac{x_1+\iu x_2}{\sqrt{x_1^2+x_2^2}} & \qquad x_1 + \iu x_2 \notin [0,+\infty) \\
		& 1      &\qquad \, x_1 + \iu x_2  \in [0,+\infty).
		\end{aligned}\right. 
		\end{equation} 
		The unitary $U$ represents a singular gauge transformation associated with the insertion of a magnetic flux tube carrying one unit of quantum flux. 
As emphasized in \cite{AvronSeilerSimon1994},  it is not restrictive to consider only unitary operators 
with unit winding number, like \eqref{eq:U}, see \cite[Theorem 4.2]{AvronSeilerSimon1994}. 
		Then, the index of pair of projections $P$, $UPU^*$, is defined by 
		\begin{equation}
		\label{eq:index}
		\operatorname{Index}(P,UPU^*):=\dim\left(\ker(P-UPU^*-1)\right)-\dim\left(\ker(UPU^*-P-1)\right) \in \Z.
		\end{equation}
		
We show that the Chern character coincides with the index of a pair of projection, and hence is an integer, 
with no ergodicity or covariance assumption on $P$.
		
		\begin{prop}
			\label{prop:CIndex}
			Let $P$ be an orthogonal projection acting on $L^2(\R^2)$ which is an exponentially localized projection in the sense of Definition \ref{dfn:ELP}. Let $U$ be the unitary multiplication operator defined in \eqref{eq:U}. Then,
			\begin{equation}
			C(P)=\operatorname{Index}(P,UPU^*)
			\end{equation}	
			where $C(P)$ is the Chern character defined in \eqref{ChernMarker} and $\operatorname{Index}(P,UPU^*)$ is the index of pair of projections defined in \eqref{eq:index}.
		\end{prop}

The proof is postponed to Appendix \ref{AppendixChern}.

\goodbreak

\section{Main result}
We can now formulate our main result.

\begin{thm}[Localization of GWB implies Chern triviality]
	\label{MainTheorem}
	Let $P$ be an orthogonal projection acting on $L^2(\R^2)$, which is an exponentially localized projection in the sense of Definition~\ref{dfn:ELP}.  Let $\Lattice \subset \R^2$ be a $r$-uniformly discrete set and let $s>4$. Suppose that $P$ admits a generalized Wannier basis $\{\w_{\gamma,a}\}_{\gamma \in \Lattice, 1 \leq a \leq m(\gamma)}$ which is $s$-localized around $\Lattice$,
in the sense of Definition~\ref{GWB}. Then the Chern character of $P$ is zero.
\end{thm}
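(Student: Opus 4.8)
The plan is to use the generalized Wannier basis to rewrite the reduced position operators as ``lattice position operators plus a rapidly decaying remainder'', and then to show that the kernel of
$\mathfrak{C}_{P}=\iu P\big[[X_1,P],[X_2,P]\big]P=\iu\big[\widetilde{X_1},\widetilde{X_2}\big]$ (recall \eqref{frakC_P}--\eqref{Chern_commutator}) is both so well localized and so close to being traceless that the trace in \eqref{ChernMarker}, taken over the box $\Lambda_L$, only receives a contribution from an $O(L)$ neighbourhood of $\partial\Lambda_L$; dividing by $4L^2$ then forces $C(P)=0$. This is the position-space, non-periodic counterpart of the classical fact that in the periodic case an exponentially localized Wannier basis makes the first Chern number vanish.

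First I would establish the decay of the matrix elements of $\widetilde{X_j}$. Writing $i=(\gamma,a)$ for a GWB index, $\gamma_i:=\gamma$ and $\w_i:=\w_{\gamma,a}$, Lemma~\ref{Rem:bounds} (applicable with $\Loc(\|\x\|)=\jp{\x}^{2s}$, which satisfies the growth hypothesis for every $\lambda>0$) gives the pointwise bound $|\w_i(\x)|\leq K\jp{\x-\gamma_i}^{-s}$. Using $P=\sum_i\ket{\w_i}\bra{\w_i}$, the operator $\widetilde{X_j}=PX_jP$ is represented on $\Ran P$ by the matrix $(X_j)_{li}=\inner{\w_l}{X_j\w_i}$; by orthonormality and self-adjointness $(X_j)_{li}=\inner{(X_j-\gamma_{l,j})\w_l}{\w_i}$ whenever $l\neq i$, so the elementary estimate $\int_{\R^2}\jp{\x-\y}^{-p}\jp{\x-\z}^{-q}\,d\x\leq C\,\jp{\y-\z}^{-\min(p,q)}$ (valid for $p,q>2$) would yield $\big|(X_j)_{li}-\gamma_{l,j}\,\delta_{li}\big|\leq C\,\jp{\gamma_l-\gamma_i}^{-(s-1)}$. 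Thus $\widetilde{X_j}=D_j+R_j$, where $D_j$ is the unbounded, block-diagonal ``lattice position'' operator with matrix $\gamma_{l,j}\delta_{li}$ and $R_j$ has the above rapidly decaying matrix, hence is bounded on $\Ran P$ by a Schur test (using that $\Lattice$ is $r$-uniformly discrete and $m(\gamma)\leq m_*$). Since $D_1$ and $D_2$ commute, $\mathfrak{C}_{P}=\iu\big([D_1,R_2]+[R_1,D_2]\big)+\iu[R_1,R_2]$; the first bracket has matrix element $(\gamma_{l,1}-\gamma_{i,1})(R_2)_{li}-(\gamma_{l,2}-\gamma_{i,2})(R_1)_{li}$, which \emph{vanishes on the diagonal} and decays like $\jp{\gamma_l-\gamma_i}^{-(s-2)}$, while $[R_1,R_2]$ has matrix decaying like $\jp{\gamma_l-\gamma_i}^{-(s-1)}$; in particular $\mathfrak{C}_{P}$ is bounded with a rapidly decaying kernel in the GWB.

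Next I would turn the trace in \eqref{ChernMarker} into a boundary sum. By Proposition~\ref{TraceClassProof} each $\chi_{\Lambda_L}\mathfrak{C}_{P}\chi_{\Lambda_L}$ is trace class, and expanding its continuous kernel through \eqref{braket2} one gets, for each fixed $L$, the absolutely convergent identity $\Tr\big(\chi_{\Lambda_L}\mathfrak{C}_{P}\chi_{\Lambda_L}\big)=\sum_{l,i}(\mathfrak{C}_{P})_{li}(\Pi_L)_{il}$ with $(\Pi_L)_{li}:=\inner{\w_l}{\chi_{\Lambda_L}\w_i}$. I would then compare $\Pi_L$ with the \emph{finite-rank} orthogonal projection $Q_L:=\sum_{\gamma\in\Lattice\cap\Lambda_L}\sum_a\ket{\w_{\gamma,a}}\bra{\w_{\gamma,a}}$, whose matrix is $(Q_L)_{li}=\delta_{li}\,\Id_{\{\gamma_l\in\Lambda_L\}}$, and split $\Pi_L=Q_L+(\Pi_L-Q_L)$. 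For the ``bulk'' part: $\Tr\big((\iu[D_1,R_2]+\iu[R_1,D_2])Q_L\big)=0$ because $Q_L$ is diagonal while the commutators are off-diagonal, and $\Tr\big(\iu[R_1,R_2]Q_L\big)=\iu\,\Tr\big(R_1[R_2,Q_L]\big)$ by cyclicity (licit since $Q_L$ is finite rank), where $[R_2,Q_L]$ has matrix $(R_2)_{li}\big(\Id_{\{\gamma_i\in\Lambda_L\}}-\Id_{\{\gamma_l\in\Lambda_L\}}\big)$, supported on pairs straddling $\partial\Lambda_L$, so that $\big|\Tr(R_1[R_2,Q_L])\big|\leq C\sum_{\gamma_l\in\Lambda_L,\ \gamma_i\notin\Lambda_L}\jp{\gamma_l-\gamma_i}^{-2(s-1)}=O(L)$. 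For the ``boundary'' part: $\big|\Tr\big(\mathfrak{C}_{P}(\Pi_L-Q_L)\big)\big|\leq\norm{\mathfrak{C}_{P}}\,\norm{\Pi_L-Q_L}_{1}$, and I would bound $\norm{\Pi_L-Q_L}_{1}\leq\sum_{l,i}|(\Pi_L-Q_L)_{li}|$ using $|(\Pi_L-Q_L)_{ll}|=|\inner{\w_l}{\chi_{\Lambda_L}\w_l}-\Id_{\{\gamma_l\in\Lambda_L\}}|\leq C\jp{\dist(\gamma_l,\partial\Lambda_L)}^{-(2s-2)}$ and, for $l\neq i$, $(\Pi_L-Q_L)_{li}=-\inner{\w_l}{\chi_{\Lambda_L^{\mathrm c}}\w_i}$ together with a splitting of powers giving $C\jp{\dist(\gamma_l,\partial\Lambda_L)}^{-s/2}\jp{\gamma_l-\gamma_i}^{-s/2}$; summing, and using that the number of $\gamma\in\Lattice$ at distance $\sim n$ from $\partial\Lambda_L$ is $O(L+n)$, would give $\norm{\Pi_L-Q_L}_{1}=O(L)$. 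All the perimeter-weighted series converge for $s$ large; the hypothesis $s>5$ is comfortably sufficient (and not sharp, as already announced). Combining the estimates, $\Tr(\chi_{\Lambda_L}\mathfrak{C}_{P}\chi_{\Lambda_L})=O(L)$, so by \eqref{ChernMarker} the limit defining $C(P)$ exists and $C(P)=\lim_{L\to\infty}\tfrac{2\pi}{4L^2}\,O(L)=0$.

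The step I expect to be the main obstacle is the one just described: $\widetilde{X_j}$, $D_j$ and $\Pi_L$ are all unbounded or non-trace-class, so one must organise the algebra so that only bounded operators and the finite-rank $Q_L$ ever sit under the trace — the $D_j$'s occurring exclusively inside the already bounded commutators $[D_j,R_k]$, and cyclicity being invoked only together with $Q_L$ — and then carry out the boundary bookkeeping carefully enough that the ``bulk'' contribution telescopes away and only an $O(\mathrm{perimeter})=O(L)$ remainder survives. The fact that a generic GWB has only \emph{polynomial} decay, rather than the exponential decay available in the periodic case of \cite{MonacoPanatiPisanteTeufel2018}, is precisely what forces a quantitative threshold on $s$ in this final estimate.
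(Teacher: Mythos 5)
Your proof is correct and is at bottom the same proof as the paper's: both pivot on the diagonal ``lattice position'' operator $\Gamma_j$ (your $D_j$) introduced in Section~\ref{Sec:GammaOperators}, decompose $\widetilde X_j=\Gamma_j+R_j$, expand the double commutator, and show that every contribution to $\Tr(\chi_{\Lambda_L}\mathfrak{C}_P\chi_{\Lambda_L})$ beyond the trivially vanishing one grows only like $L$. The genuine difference is in the bookkeeping. The paper computes the trace directly as a triple sum over GWB indices, substitutes $\chi_{\Lambda_L}=\Id-\chi_{\Lambda_L^{\mathrm c}}$ inside the sum, re-indexes by hand to exhibit the cancellation, and estimates the residual boundary series term by term using the auxiliary conditions \eqref{F1}--\eqref{F4}. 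You instead package the cut-off into the two operators $Q_L$ (finite-rank, diagonal in the GWB) and $\Pi_L=P\chi_{\Lambda_L}P$, absorb the bulk cancellation into a single cyclicity step $\Tr([R_1,R_2]Q_L)=\Tr(R_1[R_2,Q_L])$ and the diagonal-versus-off-diagonal observation, and estimate the boundary term in one shot via $|\Tr(\mathfrak{C}_P(\Pi_L-Q_L))|\le\|\mathfrak{C}_P\|\,\|\Pi_L-Q_L\|_1$ together with $\|\Pi_L-Q_L\|_1=O(L)$. This operator-theoretic packaging is cleaner and, by not breaking the extra factor of $\|\gamma-\xi\|$ out of the $[\Gamma_j,R_k]$ boundary terms, it avoids the integrability condition \eqref{F3} that is exactly the one forcing $s>5$ in the paper's argument; if you chase constants, your scheme already closes for $s>4$, consistent with the paper's own remark that the threshold $s_*>5$ is not expected to be optimal. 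The place where you are a little brisk is the justification that $\|A\|_1\le\sum_{l,i}|a_{li}|$ and the absolute convergence needed to write $\Tr(\chi_{\Lambda_L}\mathfrak{C}_P\chi_{\Lambda_L})=\sum_{l,i}(\mathfrak{C}_P)_{li}(\Pi_L)_{il}$; both are true here (the first since each matrix unit has trace norm one, the second because the matrix of $\mathfrak{C}_P$ is summable in rows and columns for $s>4$ and $\sum_l\|\chi_{\Lambda_L}\w_l\|^2=\Tr(P\chi_{\Lambda_L})<\infty$), but these deserve a sentence each, and they rest on Proposition~\ref{TraceClassProof} and Lemma~\ref{prop:SeriesToInt} in the same way the paper's triple-sum manipulations do.
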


Notice that periodicity plays no role in Theorem~\ref{MainTheorem}. In non-periodic context, one might think that a natural generalization of Bravais lattice is a \emph{Delone set}, which is both $r$-uniformly discrete and $R$-uniformly nowhere sparse$^{\ref{fn:nosparse}}$. However, in the proof of the above Theorem only the property of $\mathfrak D$ to be uniformly discrete is used, via the\virg{Generalized Maclaurin--Cauchy test} (Lemma \ref{prop:SeriesToInt}). 

As we have already mentioned in the Introduction, in the periodic context the relation between the localization of Wannier bases and the topology of the Bloch bundle is a well established paradigm  \cite{MonacoPanatiPisanteTeufel2018}. 
Inspired by the latter result, we conjecture that the remarkable dichotomic character that pertains to periodic systems is actually a more general phenomenon.
\begin{conj}[Localization dichotomy for non-periodic gapped systems]
	\label{ConjectureLocDic}
	\hspace{1pt}
	Let $P_0$ be the spectral projection, acting on $L^2(\R^2)$, corresponding to an isolated component 
	$\sigma_0$ of an operator satisfying the hypotheses of Proposition~\ref{prop:GQS}. Then there exists a number $s_* \geq 1$  such that
the following statements are equivalent:
	\begin{enumerate}[label=(\alph*),ref=(\alph*)]
		\item \label{exp-loc} $P_0$ admits a generalized Wannier basis that is {exponentially localized} 
		around  a $r$-uniformly discrete set, for some $r >0$.
		\item \label{s-loc} $P_0$ admits a generalized Wannier basis that is {$s$-localized, with $s \geq s_*$}, 
		around a $r'$-uniformly discrete set, for some $r' >0$.
		\item \label{c-zero} $P_0$ is {Chern trivial} in the sense that its Chern character $C(P_0)$ 
		exists and is equal to zero.
	\end{enumerate}
\end{conj}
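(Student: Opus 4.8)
The plan is to prove the equivalence as a cycle $\ref{exp-loc}\Rightarrow\ref{s-loc}\Rightarrow\ref{c-zero}\Rightarrow\ref{exp-loc}$, of which the first two arrows are already within reach and the third is the genuine content of the conjecture. The implication $\ref{exp-loc}\Rightarrow\ref{s-loc}$ is immediate from the properties of localization functions: since $\langle\x\rangle^{2s}\le C_{s,\alpha}\,\E^{2\alpha\|\x\|}$ for every $s>0$ and $\alpha>0$, an exponentially localized GWB is automatically $s$-localized for every $s$, hence in particular for $s\ge s_*$, around the same $r$-uniformly discrete set. The implication $\ref{s-loc}\Rightarrow\ref{c-zero}$ is precisely Theorem~\ref{MainTheorem}, which applies once the threshold is fixed at, say, $s_*=6$; the conjecturally sharp value $s_*=1$, matching the borderline exponent of the periodic dichotomy \cite{MonacoPanatiPisanteTeufel2018}, would require improving the Combes--Thomas and Cauchy--Schwarz estimates in the proof of Theorem~\ref{MainTheorem}, but any admissible $s_*$ suffices for the equivalence itself.

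The hard part is $\ref{c-zero}\Rightarrow\ref{exp-loc}$: from $C(P_0)=0$ alone, construct an exponentially localized GWB around some $r$-uniformly discrete set. In the periodic case this follows because $c_1(P_0)=0$ trivializes the Bloch bundle smoothly, yielding a global smooth Bloch frame whose inverse Bloch--Floquet transform is an exponentially localized Wannier basis; in the non-periodic setting there is neither a Brillouin torus nor a Bloch bundle, so the construction must be carried out intrinsically in position space. I see three plausible routes. \emph{(i) Deformation.} Connect $P_0$ to a reference projection $P\sub{ref}$ possessing an obvious exponentially localized GWB --- \eg an orthogonal sum of rank-one projections onto $C_{\rm c}^\infty$ bumps centred at the points of $\Z^2$ --- through a path $\{P_t\}_{t\in[0,1]}$ that is continuous in a suitable exponentially weighted operator norm and along which the Chern marker is constant; then transport the GWB step by step by the Kato--Nagy intertwiner, exactly as in Examples~\ref{example:Impurties}--\ref{example:MGWB}, the intertwiner preserving exponential localization by the weighted estimate used there. \emph{(ii) Reduced position operators.} For $d=1$ the GWB is obtained by diagonalizing $P_0XP_0$ (Example~\ref{exampe:PXP}); for $d=2$ the operators $\widetilde{X}_1,\widetilde{X}_2$ do not commute, but formula \eqref{Chern_commutator} reads $C(P_0)=\Tuv\!\left(\iu\,\big[\widetilde{X}_1,\widetilde{X}_2\big]\right)$, so $C(P_0)=0$ is an averaged smallness statement for that commutator that one would try to upgrade to an approximate, exponentially localized, joint diagonalization. \emph{(iii) $C^*$-algebraic route.} Following \cite{LudewigThiang19,BourneMesland}, realize the sections attached to $P_0$ as a Hilbert module over an algebra adapted to $\Lattice$ (a uniform Roe--type algebra), show that vanishing of the appropriate Chern class forces freeness of the module, and read off the basis.

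In all three routes the same obstruction recurs, and it is where I expect the real work to lie: in the absence of periodicity or covariance, the Chern marker $C(P_0)$ is not yet known to be a \emph{robust} invariant --- integrality, homotopy invariance through exponentially localized projections, and additivity are all open (see Remark~\ref{rmk:IntegralityCM}) --- and each of (i)--(iii) tacitly relies on exactly such a structural theory. I would therefore split the attack into two stages. \emph{Stage 1:} establish a quantization/rigidity theory for $C(P_0)$ on exponentially localized projections, most naturally by relating the Chern marker to the Hall conductance $G_{12}$ of \eqref{Hall_conductance} and to its Fredholm-index interpretation \cite{AvronSeilerSimon1994}, so as to obtain that $C(P_0)\in\Z$ and that it is locally constant along paths of exponentially localized projections. \emph{Stage 2:} combine Stage~1 with route~(i), namely connect $P_0$ to $P\sub{ref}$ through such a path and transport the GWB. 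The crux --- and the reason the full equivalence remains conjectural --- is Stage~1: showing that the non-covariant Chern marker is rigid enough to behave like a genuine topological invariant. Routes (ii) and (iii) are alternatives whose success hinges on the same point.
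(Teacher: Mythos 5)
The statement you are proving is stated in the paper as a \emph{conjecture}, and the paper itself does not prove it: it only establishes the two implications \ref{exp-loc}$\Rightarrow$\ref{s-loc} (by the elementary domination of polynomial weights by exponential ones) and \ref{s-loc}$\Rightarrow$\ref{c-zero} for any threshold $s_*>5$ (this is exactly Theorem~\ref{MainTheorem}), and explicitly declares \ref{c-zero}$\Rightarrow$\ref{exp-loc} to be the open problem. Your treatment of the two provable arrows coincides with the paper's: the inequality $\jp{\x}^{2s}\leq C_{s,\alpha}\,\E^{2\alpha\|\x\|}$ gives the first, and invoking Theorem~\ref{MainTheorem} with $s_*=6$ (or any $s_*>5$) gives the second, noting as you do that $P_0$ is exponentially localized by Proposition~\ref{prop:GQS}\ref{item:FermiEL} so the theorem's hypotheses are met. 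So for the portion of the conjecture that is actually within reach, your argument is correct and is the same as the paper's.

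The genuine gap is the one you yourself flag: the cycle is not closed, because \ref{c-zero}$\Rightarrow$\ref{exp-loc} is not proved. Your three routes are reasonable and consistent with the literature the paper cites (deformation and Kato--Nagy transport as in \cite{NenciuNenciu1993}, reduced position operators as in \cite{NenciuNenciu1998,StubbsWatsonLu20}, and the $C^*$-module approach of \cite{LudewigThiang19,BourneMesland}), and your diagnosis of the common obstruction --- that without periodicity or covariance the Chern marker is not known to be integer-valued, homotopy-invariant, or even well defined as a limit, \cf Remark~\ref{rmk:IntegralityCM} --- matches the paper's own assessment. But a diagnosis is not a proof: route (i) presupposes precisely the rigidity statement (local constancy of $C(P)$ along paths of exponentially localized projections) that your Stage~1 would have to establish, and neither stage is carried out. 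The honest conclusion is that your proposal proves exactly what the paper proves and leaves open exactly what the paper leaves open; it should not be presented as a proof of the full equivalence. One minor additional caution on route (i): a \virg{reference projection} built from compactly supported bumps must still be connected to $P_0$ within the class of exponentially localized projections on which the trace per unit volume in \eqref{ChernMarker} exists, and the existence of that limit along the path is itself one of the unresolved issues.
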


Clearly, \ref{exp-loc} implies \ref{s-loc} by a simple inequality. The implication from \ref{s-loc} (with $s_*>4$) to \ref{c-zero} is the content of Theorem \ref{MainTheorem}, which is proved in the next section. 
The open problem in the conjecture is to show that \ref{c-zero} implies \ref{exp-loc}. Moreover, since for any time reversal symmetric system the Chern character vanishes, see Proposition \ref{Prop:TRSChern}, proving that \ref{c-zero} implies \ref{exp-loc} would solve a long standing conjecture about the existence of generalized Wannier bases for time reversal symmetric systems \cite{NenciuNenciu1993}. \\
Furthermore, in view of the periodic counterpart, the threshold $s_*>4$ appearing in Theorem \ref{MainTheorem}
does not seem optimal. In fact, a generalization of the localization dichotomy for periodic systems 
proved in \cite{MonacoPanatiPisanteTeufel2018} would require $s_*=1$.  

Theorem \ref{MainTheorem}, coupled with the techniques used in \cite{NenciuNenciu1993}, allows to show that the non-existence of a well-localized GWB is stable with respect to suitable perturbations, as detailed in the following Corollary. A similar stability result concerning the vanishing of the Chern character holds also true. 
\begin{crl}[Stability of the GWB delocalization]
\label{Corollary}
Consider the family of Hamiltonian operators, acting in $L^2(\R^2)$,
$$
H_\lambda=\frac{1}{2}(-\iu \nabla - b_0{\bf{A}}-\mathbf{A}_{\Z^2})^2 + V_{\Z^2}+ \lambda W \, ,
\qquad   \lambda\in\R
$$
where $V_{\Z^2}$ is a smooth $\Z^2$-periodic scalar potential, $\mathbf{A}_{\Z^2}$ is a smooth $\Z^2$-periodic vector potential, $b_0 \in 2\pi \mathbb{Q}$ and ${\mathbf{A}}(\x)=\frac{1}{2}\left(-x_2,x_1\right)$. 
Assume that $W$ is infinitesimally relatively bounded with respect to $H_0$ and that the spectrum of $H_0$ has an isolated component $\sigma_0\subset (E_-,E_+)$, $E_\pm \in \R \setminus \sigma(H_0)$. 
For $|\la|$ small enough, say $|\la| < \delta$,  $E_\pm$ belong to the resolvent set of $H_{\la}$. Denote by $P_\la$ the spectral projection onto the spectral island $\sigma_\la:=\sigma(H_\la) \cap (E_-,E_+)$. It holds true that
\begin{enumerate}[label=(\roman*), ref=(\roman*)]
	\item \label{Cor1}if the the Chern number of $P_0$ is different from zero, then for $|\la| < \delta$ the projection $P_\lambda$ does not admit any GWB that is $s$-localized for $s>4$;
	\item \label{Cor2} if $P_0$ admits an $s$-localized GWB around a certain $r$-uniformly discrete set for $s>4$, then $C(P_\la)=0$ for $|\la|<\delta$.
\end{enumerate}	
\end{crl}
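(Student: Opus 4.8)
The plan is to deduce both claims from Theorem~\ref{MainTheorem} by transporting a generalized Wannier basis between $P_0$ and $P_\lambda$ along an intertwining unitary, exactly in the spirit of \cite{NenciuNenciu1993} (see also Example~\ref{example:Impurties}). Since $W$ is infinitesimally relatively bounded with respect to $H_0$, the family $\{H_\lambda\}$ is a holomorphic family of type~A, and on a contour $\mathcal{C}$ separating $\sigma_0$ from the rest of $\sigma(H_0)$ and remaining in the resolvent set of every $H_\lambda$ with $|\lambda|<\delta$, the resolvent $(H_\lambda-z)^{-1}$ is holomorphic in $\lambda$; hence $P_\lambda=-\tfrac{1}{2\pi\iu}\oint_{\mathcal{C}}(H_\lambda-z)^{-1}\,dz$ is norm-holomorphic, so $\|P_\lambda-P_0\|\leq C|\lambda|$ and, after shrinking $\delta$, $\|P_\lambda-P_0\|<1$. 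First I would check that $P_\lambda$ is an exponentially localized projection in the sense of Definition~\ref{dfn:ELP}: joint continuity of its integral kernel follows as in the sketch of Proposition~\ref{prop:GQS} (now carried out relative to $H_0$ rather than $-\tfrac12\Delta$), and the exponential off-diagonal decay follows from the Combes--Thomas estimate on $(H_\lambda-z)^{-1}$ together with the uniform spectral gap, just as for $P_0$.

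Next I would introduce the Kato--Nagy intertwining unitary $U_\lambda$, determined by $P_0$ and $P_\lambda$ and satisfying $U_\lambda P_0 U_\lambda^{*}=P_\lambda$, and I would prove that $U_\lambda=\Id+V_\lambda$ with $V_\lambda$ an exponentially localized operator, uniformly for $|\lambda|<\delta$. Writing $R_\lambda:=P_\lambda-P_0$, one has $U_\lambda=(\Id-R_\lambda^{2})^{-1/2}\bigl(P_\lambda P_0+(\Id-P_\lambda)(\Id-P_0)\bigr)$; both $R_\lambda$ and $R_\lambda^{2}$ have exponentially decaying kernels, and a Combes--Thomas rotation of $H_\lambda$ (whose correction terms are first order in $\nabla$, hence $H_0$-bounded with arbitrarily small relative bound, so harmless for an infinitesimally $H_0$-bounded $W$) yields $\sup_{\mathbf{a}\in\R^2}\bigl\|e^{-\delta\langle\cdot-\mathbf{a}\rangle}R_\lambda\,e^{\delta\langle\cdot-\mathbf{a}\rangle}\bigr\|\leq C|\lambda|$; plugging this into the power series of $t\mapsto(1-t)^{-1/2}$ and into the algebraic identity above produces the claimed structure of $U_\lambda$, precisely as in \cite{NenciuNenciu1993}. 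The point of this step is the following transport principle: if $\{\w_{\gamma,a}\}_{\gamma\in\Lattice,\,1\leq a\leq m(\gamma)}$ is an $s$-localized GWB for $P_0$ around the $r$-uniformly discrete set $\Lattice$, then $\{U_\lambda\w_{\gamma,a}\}$ is an $s$-localized GWB for $P_\lambda$ around the \emph{same} set $\Lattice$ --- orthonormality and the bound $m(\gamma)\leq m_*$ survive because $U_\lambda$ is unitary, and the localization estimate \eqref{GWFLoc} survives because $\Id$ preserves it trivially while $V_\lambda$, having an exponentially decaying kernel, sends a function $s$-localized around $\gamma$ to one $s$-localized around $\gamma$ with constant uniform in $\gamma$ (this last fact is an argument entirely analogous to the proof of Lemma~\ref{Rem:bounds}, using the Schur test and $\langle\x-\gamma\rangle^{2s}\leq C\langle\x-\y\rangle^{2s}\langle\y-\gamma\rangle^{2s}$). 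The same reasoning applied to $U_\lambda^{*}$ transports an $s$-localized GWB for $P_\lambda$ back to one for $P_0$.

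With these ingredients the corollary follows quickly. For \ref{Cor2}: the transport principle gives $P_\lambda$ an $s$-localized GWB with $s>5$ around $\Lattice$, and $P_\lambda$ is exponentially localized, so Theorem~\ref{MainTheorem} yields $C(P_\lambda)=0$ for every $|\lambda|<\delta$ (a stability statement for the vanishing of the Chern marker, paralleling the magnetic case of \cite{CorneanMonacoMoscolari2018}). For \ref{Cor1}: suppose, for contradiction, that for some $|\lambda|<\delta$ the projection $P_\lambda$ admits a GWB that is $s$-localized with $s>5$ around some $r'$-uniformly discrete set; transporting it backward along $U_\lambda^{*}$ yields an $s$-localized GWB for $P_0$, whence $C(P_0)=0$ by Theorem~\ref{MainTheorem}. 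But $b_0\in2\pi\Q$ makes $H_0$ commute with a unitary representation of $\Z^2$ by magnetic translations along a suitable finite-index sublattice, so by the discussion in Section~\ref{subsec:ChernMarker} the Chern marker equals the Chern number of the associated Bloch bundle, $C(P_0)=c_1(P_0)\in\Z$; this contradicts the hypothesis $c_1(P_0)\neq0$, and \ref{Cor1} follows.

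The main obstacle is the second paragraph: proving, \emph{uniformly in the localization centre} $\gamma\in\Lattice$ \emph{and in} $\lambda$, that $U_\lambda$ differs from the identity only by an exponentially localized operator. This rests on the weighted Combes--Thomas estimate $\sup_{\mathbf{a}}\|e^{-\delta\langle\cdot-\mathbf{a}\rangle}R_\lambda e^{\delta\langle\cdot-\mathbf{a}\rangle}\|\leq C|\lambda|$ and on tracking exponential localization through the functional calculus defining $(\Id-R_\lambda^{2})^{-1/2}$ --- the technical heart of \cite{NenciuNenciu1993}, which one must re-run in the present magnetic, non-periodic setting. A secondary point needing care is verifying that $P_\lambda$ really satisfies Definition~\ref{dfn:ELP} (so that Theorem~\ref{MainTheorem} is applicable) despite $W$ being only infinitesimally $H_0$-bounded rather than in $L^2_{\mathrm{uloc}}(\R^2)$; this is handled by carrying out the resolvent/semigroup arguments behind Proposition~\ref{prop:GQS} relative to $H_0$.
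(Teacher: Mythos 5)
Your proposal is correct and follows essentially the same route as the paper's own proof: perturbation theory gives $\|P_\lambda-P_0\|\le C|\lambda|<1$, the Kato--Nagy unitary $U_\lambda$ intertwines the two projections and differs from $\Id$ by an exponentially localized operator, a Schur-type weighted estimate then shows $U_\lambda$ preserves $s$-localization of a GWB, and Theorem~\ref{MainTheorem} together with the identity $C(P_0)=c_1(P_0)$ (valid since $b_0\in2\pi\Q$) closes the argument. Two small remarks on how yours compares with the paper's: the paper cites \cite[Lemma C.1]{CorneanMonacoMoscolari2019} as a black box for the exponential localization of $U_\lambda-\Id$, while you re-derive this structure directly from the Kato--Nagy formula, the power-series expansion of $(\Id-R_\lambda^2)^{-1/2}$, and a weighted Combes--Thomas bound on $R_\lambda=P_\lambda-P_0$ --- a more self-contained but longer path to the same fact; and you explicitly flag, and propose to fix, the point that $W$ is only infinitesimally $H_0$-bounded (not assumed in $L^2_{\mathrm{uloc}}$), so the exponential localization of $P_\lambda$ must be argued relative to $H_0$ rather than by a literal appeal to Proposition~\ref{prop:GQS}, a subtlety the paper passes over somewhat quickly.
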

\begin{proof} 
Let us prove statement \ref{Cor1}. We first consider $\la =0$. Suppose by contradiction that $P_0$ admits a $s$-localized GWB, with $s>4$. Then, in view of Theorem \ref{MainTheorem}, the Chern character of $P_0$ vanishes. This leads to a contradiction since, as $P_0$ commutes with a unitary representation of $\Z^2$, 
one has $C(P_0) = c_1(P_0) \neq 0$. Consider now $\la \neq 0$.  Since $W$ is infinitesimally small with respect to $H_0$, $H_\lambda$ defines an entire family of type A, hence for $|\lambda|$ small enough $E_\pm$ belongs to the resolvent set of $H_\lambda$ and $\sigma_\lambda$ is a well defined isolated component of the spectrum \cite[Remark VII.2.3, p. 379]{Kato}, such that  $\|P_\lambda - P\| \leq C |\lambda|<1$.
Assume now by contradiction that $P_\lambda$ admits a GWB that is $s$-localized for $s>4$.
Therefore, following the idea in \cite{NenciuNenciu1993}, it is possible to unitarily transport the GWB back to the original system by using the Kato--Nagy unitary operator $U$ that intertwines $P_0$ and $P_\la$.
 Moreover, since $H_\la$ satisfies the assumptions of Proposition \ref{prop:GQS}, we can apply the result of \cite[Lemma C.1]{CorneanMonacoMoscolari2019}, which implies that $U-\Id$ is an integral operator with an exponentially localized kernel, that is a kernel satisfying \eqref{DiagLocIntKErnel}. Hence, via a Schur--H\"olmgren estimate we get that 
$$
\sup_{{\bf{a}} \in \R^2} \| \langle \cdot- {\bf{a}} \rangle^{s} U \langle \cdot- {\bf{a}} \rangle^{-s} \| < +\infty
$$
which implies that $U$ preserves the localization properties of the GWB. Therefore, $P_0$ admits a GWB that is $s$-localized.     
Then, Theorem~\ref{MainTheorem} implies that the  $C(P_0)=0$, which is a contradiction.

The proof of statement \ref{Cor2} follows a similar argument.
\end{proof}

\section{Proofs}
\subsection{Well-posedness of the Chern character}\label{Sec:technical}

The main result of this section is the proof that 
the operator $\chi_{\Lambda_L} \mathfrak{C}_P \chi_{\Lambda_L}$ with $P$ an exponentially localized projection, 
see \eqref{frakC_P}, is trace class, and that its trace is $\Or(L^2)$.

As anticipated in Section \ref{subsec:ChernMarker}, the topological properties of the system are encoded in the commutator $\left[\widetilde{X}_1,\widetilde{X}_2\right]$, and the Chern character is proportional to
\begin{equation}
\label{TraceUnitVolume}
\Tuv \left( \mathfrak{C}_P \right)=\Tuv \left( \iu \left[\widetilde{X}_1,\widetilde{X}_2\right]\right) \;=\; \lim\limits_{L \to + \infty} \frac{1}{4 L^2} \Tr \left(\chi_{\Lambda_L}\, \iu \,\left[\widetilde{X}_1,\widetilde{X}_2\right]\chi_{\Lambda_L}\right).
\end{equation} 

Now we prove that for all $L>1$, each summand of the commutator
$\chi_{\Lambda_L} [\widetilde{X}_1,\widetilde{X}_2]\chi_{\Lambda_L}$, and the terms involved in the direct computation \eqref{eq:fromCherntoPXPYP}, are trace class. 
\begin{prop}
	\label{TraceClassProof}
	Let $P$ be an exponentially localized projection acting on $L^2(\R^2)$, 
	let $X_i$, for $i \in \set{1,2}$, denote the components of the position operator, 
	and let $\chi_{\Lambda_L}$ be the characteristic function of the set $\Lambda_L:=[-L,L]^2$. Then
\begin{enumerate}[label=(\roman*), ref=(\roman*)]
\item \label{item: traceclassofeachsummand} for every $m,n,m',n' \in \N = \set{0,1,\ldots}$
the operator
	$$
	\chi_{\Lambda_L}  P (X_i)^m(X_j)^{m'} P (X_j)^n (X_i)^{n'} 
	$$ 
	is a trace class operator.
\item \label{item: traceclassestdoublecomm}	
	Moreover, there exists $C>0$ such that
	$$
	\left|\Tr(\chi_{\Lambda_L}  P \left[\left[X_1, P\right],\left[X_2, P\right]\right] \chi_{\Lambda_L})\right| 
	\leq C L^2  \qquad \forall L \geq 1.
	$$
\end{enumerate}	
\end{prop}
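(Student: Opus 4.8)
The plan is to reduce both statements to the single elementary fact that an exponentially localized operator is, locally, Hilbert--Schmidt with an exponentially decaying off-diagonal mass. Decompose $\R^2=\bigsqcup_{\gamma\in\Z^2}Q_\gamma$ into the half-open unit cubes $Q_\gamma:=\gamma+[0,1)^2$, so that $\sum_{\gamma}\chi_{Q_\gamma}=\Id$ strongly and $\chi_{Q_\gamma}\chi_{Q_{\gamma'}}=\delta_{\gamma\gamma'}\chi_{Q_\gamma}$; set $J_L:=\{\gamma:Q_\gamma\cap\Lambda_L\neq\emptyset\}$, so that $\#J_L\le C L^2$, and write $\chi^L_\gamma:=\chi_{\Lambda_L}\chi_{Q_\gamma}$. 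The two ingredients I will use repeatedly are: \textbf{(a)} if an integral operator $A$ has a kernel with $|A(\x,\y)|\le C_A\E^{-\alpha\|\x-\y\|}$, then $\chi_{Q_\gamma}A\chi_{Q_{\gamma'}}\in\BHDue$ with $\norm{\chi_{Q_\gamma}A\chi_{Q_{\gamma'}}}_{\BHDue}\le C'_A\E^{-\alpha\|\gamma-\gamma'\|}$ (integrate $|A|^2$ over $Q_\gamma\times Q_{\gamma'}$ and use $\|\x-\y\|\ge\|\gamma-\gamma'\|-2\sqrt2$ there); and \textbf{(b)} the class of operators with jointly continuous, exponentially decaying kernels is closed under composition --- a convolution estimate gives $|(AB)(\x,\y)|\le C_{AB}\E^{-\alpha'\|\x-\y\|}$ for every $\alpha'<\min(\alpha_A,\alpha_B)$ --- and it contains each $[X_i,P]$, whose kernel is $(x_i-y_i)P(\x,\y)$, hence dominated by $C\E^{-\beta'\|\x-\y\|}$ for every $\beta'<\beta$, with $\beta$ as in \eqref{DiagLocIntKErnel}; in particular $[X_i,P]$ is bounded by a Schur--H\"olmgren estimate.

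For part \ref{item: traceclassofeachsummand} I would write $p_1(X):=(X_i)^m(X_j)^{m'}$ and $p_2(X):=(X_j)^n(X_i)^{n'}$, which are multiplication operators commuting with every $\chi_{Q_\gamma}$, and observe that $\chi_{Q_\gamma}p_k(X)\chi_{Q_\gamma}$ is multiplication by a function supported in $Q_\gamma$, hence bounded with norm $\le C\jp{\gamma}^{d_k}$, $d_k:=\deg p_k$. Inserting $\sum_\gamma\chi_{Q_\gamma}=\Id$ to the right of each of the two factors $P$, and $\chi_{\Lambda_L}=\sum_{\gamma_1\in J_L}\chi^L_{\gamma_1}$ on the left, expands the operator in the statement as the series
\[
\sum_{\gamma_1\in J_L}\ \sum_{\gamma_2,\gamma_3\in\Z^2}\ \chi^L_{\gamma_1}\,(\chi_{Q_{\gamma_1}}P\chi_{Q_{\gamma_2}})\,M^{(1)}_{\gamma_2}\,(\chi_{Q_{\gamma_2}}P\chi_{Q_{\gamma_3}})\,M^{(2)}_{\gamma_3},
\qquad \norm{M^{(k)}_\gamma}\le C\jp{\gamma}^{d_k}.
\]
Each summand is the product of two $\BHDue$-factors and three bounded ones, so by \textbf{(a)} its trace norm is $\le C\,\jp{\gamma_2}^{d_1}\jp{\gamma_3}^{d_2}\,\E^{-\beta'(\|\gamma_1-\gamma_2\|+\|\gamma_2-\gamma_3\|)}$; the polynomial weights are absorbed via the Peetre inequalities $\jp{\gamma_2}\le\jp{\gamma_1}\jp{\gamma_1-\gamma_2}$ and $\jp{\gamma_3}\le\jp{\gamma_1}\jp{\gamma_1-\gamma_2}\jp{\gamma_2-\gamma_3}$ into convergent lattice sums $\sum_w\jp w^{d}\E^{-\beta'\|w\|}<\infty$, which bounds the inner double sum by $C\jp{\gamma_1}^{d_1+d_2}$; summing over the $\#J_L=\Or(L^2)$ cubes $\gamma_1\in J_L$ yields a finite (polynomially $L$-dependent) total. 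Hence the series converges absolutely in $\BHUno$; that its sum coincides with the a priori densely defined operator in the statement is then seen by testing against $C^\infty_{\rm c}(\R^2)$, a common core, using $\sum_{\|\gamma\|\le N}\chi_{Q_\gamma}\to\Id$ strongly and the fact that $P$ maps exponentially decaying functions to exponentially decaying functions.

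For part \ref{item: traceclassestdoublecomm} the decisive point is to keep the position operators inside commutators. I would set $G_i:=[X_i,P]$, so $[[X_1,P],[X_2,P]]=[G_1,G_2]=G_1G_2-G_2G_1$ and $\chi_{\Lambda_L}P[[X_1,P],[X_2,P]]\chi_{\Lambda_L}$ is a difference of two operators of the type $\chi_{\Lambda_L}(PG_1G_2)\chi_{\Lambda_L}$, with $PG_1G_2$ exponentially localized by \textbf{(b)}. Decomposing as above, but now placing unit-cube cutoffs around each of the three factors $P,G_1,G_2$ and $\chi_{\Lambda_L}=\sum_{\gamma_4\in J_L}\chi^L_{\gamma_4}$ on the right,
\[
\chi_{\Lambda_L}PG_1G_2\chi_{\Lambda_L}=\sum_{\gamma_1,\gamma_4\in J_L}\ \sum_{\gamma_2,\gamma_3\in\Z^2}\ \chi^L_{\gamma_1}\,(\chi_{Q_{\gamma_1}}P\chi_{Q_{\gamma_2}})(\chi_{Q_{\gamma_2}}G_1\chi_{Q_{\gamma_3}})(\chi_{Q_{\gamma_3}}G_2\chi_{Q_{\gamma_4}})\,\chi^L_{\gamma_4},
\]
and by \textbf{(a)} each summand has trace norm $\le C\,\E^{-\beta'(\|\gamma_1-\gamma_2\|+\|\gamma_2-\gamma_3\|+\|\gamma_3-\gamma_4\|)}$, with \emph{no} polynomial weights this time. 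Summing over the interior indices $\gamma_2,\gamma_3\in\Z^2$ collapses the chain via the triangle inequality (each intermediate sum costs a finite constant and leaves a positive surviving rate), giving $\le C\E^{-\beta''\|\gamma_1-\gamma_4\|}$; summing over $\gamma_4\in\Z^2$ gives a constant, and summing over the $\#J_L=\Or(L^2)$ values of $\gamma_1$ gives $\norm{\chi_{\Lambda_L}PG_1G_2\chi_{\Lambda_L}}_{\BHUno}\le CL^2$. Since $\abs{\Tr(\cdot)}\le\norm{\cdot}_{\BHUno}$, the same bound holds for the full commutator, which is part \ref{item: traceclassestdoublecomm}.

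The convolution and lattice-sum estimates are routine bookkeeping; the one genuinely load-bearing point --- and the place where a naive argument fails --- is the one just mentioned: in \ref{item: traceclassestdoublecomm} one must \emph{not} expand $X_1,X_2$ as bare multiplication operators, since the cube weights $\jp{\gamma_1}^{d}$ are of size $L^{d}$ on $J_L$ and would only produce $\Or(L^{2+d_1+d_2})$; exploiting instead that $[X_i,P]$ is itself exponentially localized, no position weight survives and the gain $\#J_L=\Or(L^2)$ exactly matches the normalisation $1/(4L^2)$ in \eqref{ChernMarker}, which is what makes the Chern marker well posed.
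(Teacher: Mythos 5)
Your proof is correct, but it proceeds by a genuinely different route from the paper's. The paper factorizes the operator in part (i) globally: it sandwiches the two $P$'s and the monomials with exponential weights $\E^{\pm\eta\|X\|}$, so that $\chi_{\Lambda_L}P p_1(X)P p_2(X)$ becomes (bounded)$\cdot$(Hilbert--Schmidt)$\cdot$(Hilbert--Schmidt) after an explicit kernel estimate on each weighted factor; and in part (ii) it observes $\mathfrak{C}_P=\mathfrak{C}_P P$, applies the Cauchy--Schwarz inequality for traces $\abs{\Tr(AB)}\le\norm{A}_2\norm{B}_2$, and shows $\norm{\chi_{\Lambda_L}\mathfrak{C}_P P}_2$ and $\norm{P\chi_{\Lambda_L}}_2$ are each $\Or(L)$ by integrating the exponentially decaying kernels over $\Lambda_L\times\R^2$. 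You instead tile $\R^2$ by unit cubes and control block trace norms via the HS decay $\norm{\chi_{Q_\gamma}A\chi_{Q_{\gamma'}}}_2\lesssim\E^{-\alpha\|\gamma-\gamma'\|}$, then perform lattice sums. The advantages of your route are that the $L$-dependence is tracked transparently (polynomial in $L$ for part (i), exactly $\Or(L^2)=\#J_L$ for part (ii), whereas the paper's weight $\chi_{\Lambda_L}\E^{4\eta\|X\|}$ is exponentially large in $L$ — harmless for trace-classness but less informative), and that the same lattice machinery is reused by the authors elsewhere in the paper (Lemma 4.2); the paper's route is shorter and avoids Peetre-type bookkeeping. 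Your closing observation is exactly the right one and is implicit in the paper's choice to estimate $\mathfrak{C}_P$ directly rather than its expansion in bare $X_i$: replacing $X_i$ by the commutator $[X_i,P]$ removes the polynomial weight, so the $\Or(L^2)$ from $\#J_L$ matches the $1/(4L^2)$ normalization in the trace per unit volume.
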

\begin{proof}
	\ref{item: traceclassofeachsummand}. Consider $\beta$ as in inequality \eqref{DiagLocIntKErnel}. Let $\eta<\beta/3$ and consider the multiplication operators  $\E^{-\eta \| X \|}$ and  $\E^{\eta \| X \|}$. 	One has
	\begin{align*}
	&\chi_{\Lambda_L} P (X_i)^m (X_j)^{m'}P (X_j)^n(X_i)^{n'}\\
	& = \chi_{\Lambda_L} \, \E^{4\eta \|X\| }\, \E^{-4\eta \| X \|} \, P \, \E^{2\eta \|X\| }  \, (X_i)^m (X_j)^{m'}  \,  \E^{-2\eta \| X \|}\, P \, (X_j)^n (X_i)^{n'} \, .
	\end{align*}
	From estimate \eqref{DiagLocIntKErnel} and the triangle inequality it follows that
	\begin{align*}
	&\left|\left(  \E^{-4\eta \| X \|} \, P \, \E^{2\eta \|X\| }  \, (X_i)^m (X_j)^{m'} \right)(\x,\y) \right| \leq  C \E^{-\eta \| \x \|} \E^{-(\beta-3\eta) \|\x-\y\|}   \, ,\\
	&\left| \left(  \E^{-2\eta \| X \|}\, P \, (X_j)^n (X_i)^{n'} \right)(\x,\y) \right| \leq  C' \E^{-(\beta-\eta) \|\x-\y\|}  \E^{-\eta \| \x \|}\, .
	\end{align*}
	Thus, the two operators appearing on the l.h.s above have both integral kernels in $L^2(\R^2 \times \R^2)$, and hence are Hilbert--Schmidt operators. Since the product of two Hilbert--Schmidt operators is in the trace class ideal, and $\chi_{\Lambda_L} \, \E^{4\eta \| X \|}$  is a bounded operator, the first part of the proposition is proved.  In particular, choosing $m=0=m'$ and $n=0=n'$, one concludes that $\chi_{\Lambda_L}P^2=\chi_{\Lambda_L}P$ is trace class.
	 	
	\ref{item: traceclassestdoublecomm}. We have that
    	$$
	\begin{aligned}
	\left|\Tr(\chi_{\Lambda_L}  \mathfrak{C}_P \chi_{\Lambda_L})\right| = \left|\Tr(\chi_{\Lambda_L} \mathfrak{C}_P P \chi_{\Lambda_L})\right| \leq \| \chi_{\Lambda_L} \mathfrak{C}_{P}  P  \chi_{\Lambda_L} \|_1 \leq \|\chi_{\Lambda_L} \mathfrak{C}_{P}  P \|_2 \| P \chi_{\Lambda_L} \|_2 .
	\end{aligned} 
	$$
	By writing explicitly the integral kernel of $\mathfrak{C}_{P}$ and exploiting that the kernel of $P$ is jointly continuous and satisfies \eqref{DiagLocIntKErnel}, we obtain that there exist two positive constants $\alpha$ and $C$ such that
	$$
	|\left(\chi_{\Lambda_L}  \mathfrak{C}_{P}  P \right)(\x,\y)| \leq C \chi_L(\x) \E^{-\alpha\|\x-\y\|} .
	$$
	Hence, we can easily estimate the Hilbert-Schmidt norm of $\chi_{\Lambda_L}  \mathfrak{C}_{P}  P $ and $P  \chi_{\Lambda_L}$ by explicit integration and conclude the proof of the proposition.
\end{proof}

\subsection{The proof in a nutshell}
Let $P$ be an exponentially localized projection that admits a GWB localized around a $r$-uniformly discrete set $\Lattice$, with localization function $G$ (see Definition~\ref{GWB}). In general,  one cannot expect the operators $\widetilde{X}_i$ to be diagonal in the GWB representation, as the expectation value of the position operators in the GWB, namely  
$
\langle \w_{\gamma,{a}}, X_i \w_{\eta,{b}} \rangle  \, , 
$
has non vanishing off-diagonal elements. However, in order to understand the main idea behind our proof, let us suppose for the moment that $P$ admits a GWB made of generalized Wannier functions localized on a $r$-uniformly discrete set $\Lattice$ and with compact and mutually disjoint supports. In this setting, we have
\begin{equation}
\label{eq:Xtdiag}
\langle \w_{\gamma,a}, X_i \w_{\eta,b} \rangle= \langle \w_{\gamma,a}, \widetilde{X}_i \w_{\eta,b} \rangle =: \delta_{\gamma,\eta}\delta_{a,b}f_i(\gamma,a),  \qquad i\in \{1,2\} \,,
\end{equation} in view of the mutually disjoint support property of the GWFs. Then, the operators $\widetilde{X}_1$ and $\widetilde{X}_2$ are diagonal operators in the generalized Wannier basis, which implies that 
$[\widetilde{X}_1,\widetilde{X}_2]=0$. Therefore, we can easily see from \eqref{TraceUnitVolume} that in such oversimplified setting the Chern character vanishes and Theorem \ref{MainTheorem} holds true. 

Hence, we find convenient to introduce the operators $\NP_i$, for $i\in \left\{1,2\right\}$, 
by setting
\begin{equation}
\label{LatticePosition}
\NP_i:= \sum_{\gamma \in \Lattice} \sum_{1\leq a \leq m(\ga)} \gamma_i \ket{\w_{\gamma,a}}  \bra{\w_{\gamma,a}} \, ,
\end{equation}
acting on the maximal domain $\mathcal{D}(\NP_i):= \left\{ \varphi \in L^2(\R^2) \; | \; \sum_{\gamma,a} |\gamma_i|^2 |\langle \w_{\gamma,a}, \varphi \rangle |^2 < +\infty \right\}$. \\
The orthonormality of the GWB implies that
\begin{equation*}
\NP_i P = \NP_i  = P \NP_i \, , \qquad  \NP_i \NP_j = \NP_j \NP_i \, ,
\end{equation*}
and, obviously, one has $\Tuv( \left[\NP_i,\NP_j\right] ) = 0$.
The strategy of the proof of Theorem~\ref{MainTheorem} is to control \virg{how far} the commutator between the reduced position operators is from the commutator between the operators $\NP_i$, when they are localized on the compact region $\Lambda_L$. In particular, we show that 
\begin{equation}
\label{eq:ProofExpl}
\left| \Tr\left(\chi_{{\Lambda_L}} \left(\left[\widetilde{X}_1,\widetilde{X}_2\right]-\left[\NP_i,\NP_j\right] \right)\right) \right| = \mathcal{O}(L) ,
\end{equation}
which implies that 
$$
\Tuv\left(\left[\widetilde{X}_1,\widetilde{X}_2\right]\right)=\Tuv\left(\left[\widetilde{X}_1,\widetilde{X}_2\right]-\left[\NP_i,\NP_j\right]\right)=0.
$$

The proof consists in splitting the contribution of the difference of commutators appearing in \eqref{eq:ProofExpl} in several terms and to show that each of these terms separately goes at most linearly in $L$ as $L \to \infty$. The main ingredients of our analysis are the estimates contained in the Proposition \ref{prop:ImpoEstimates} below which are of two types:
\begin{itemize}
	\item Estimates \eqref{MassIn} and \eqref{MassOut} show that the $L^2$ norm contribution in $\Lambda_L$ (resp. in  $\Lambda^c_L$ ) coming from the GWFs that have a center outside $\Lambda_L$ (resp. inside $\Lambda_L$) is at most of order $L$. Loosely speaking, when we look at the norm of GWFs in a certain region of  the plane, the error we make by considering only the ones with the center in such region grows only like the boundary of the region.
	\item The estimates (\ref{F1}-\ref{F4}) are used to analyze the contribution to the trace in \eqref{eq:ProofExpl} coming from the fact that the reduced position operators $\widetilde{X}_i$ are not diagonal in the generalized Wannier basis, namely \eqref{eq:Xtdiag} does not hold true in general. From the proof of \eqref{F4}, we have that such error is again of order $L$ as $L\to \infty$. 
\end{itemize}

\subsection{Proof of Theorem~\ref{MainTheorem}}
To make the proof as clear as possible, 
we first recollect in the next technical proposition all the important estimates on the GWB that we need for the proof. Notice that in order to estimate the values of certain series over the $r$-uniformly discrete set $\Lattice$ we make use of a generalized Cauchy-Maclaurin test, see Lemma \ref{prop:SeriesToInt},  to obtain an upper bound using Lebesgue integrals over the plane.

\begin{prop}
	\label{prop:ImpoEstimates}
	Let the hypotheses of Theorem~\ref{MainTheorem} be satisfied. Then there exist two positive constants $\mo,\mi$ such that, for every $L\geq 1$, one has
	\begin{align}
	\label{MassOut}
	&	\sum\limits_{\xi \in \Lambda_L , \, c} \left\|\chi_{\Lambda_L^c} \w_{\xi,c}\right\| 
	\leq \mo  \, L \, , &\\
	\label{MassIn}
	&	\sum\limits_{\xi \notin \Lambda_L \, , c} \left\|\vphantom{\chi_{\Lambda_L^c} }\chi_{\Lambda_L} \w_{\xi,c}\right\|   
	\leq \mi \, L \, . &
	\end{align}
Moreover, there exists a function $F: [0,+ \infty) \to [0,+ \infty)$  such that
	\begin{equation}
	\label{FOffDiagonal}
	\sum\limits_{a=1}^{m(\gamma)}\sum\limits_{b=1}^{m(\eta)}\, \left|\inner{\w_{\gamma,a}}{ (X_i-\gamma_i) \w_{\eta,b}} \right| \leq F(\|\gamma-\eta\|)  \qquad \forall\, i\in \{1,2\} \, ,
	\end{equation}
and the following integrability conditions are satisfied 
	\begin{align}
	&\label{F1}
	 \int_{\R^2}d\x \, F(\|\x\|) =: I_1 < \infty \, , \\
	\label{F2}
	&\int_{\R^2}d\x \, F(\|\x\|)^2 =: I_2 < \infty \,, \\
	\label{F3}
	&\int_{\R^2}d\x \, F(\|\x\|)\,\|\x\| =: I_3 < \infty \, .\\
	\label{F4}
	&\lim\limits_{L \to + \infty} \frac{1}{L^2} \int_{\Lambda_L} d \x \, \int_{\R^2 \setminus \Lambda_L} d \y \, F^2(\|\x-\y\|)  \, =\, 0 \, .
	\end{align}
\end{prop}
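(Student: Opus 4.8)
\emph{Strategy.} Every bound in the statement is an elementary estimate resting on three facts: the $s$-localization hypothesis $\int_{\R^2}\jp{\x-\gamma}^{2s}\,|\w_{\gamma,a}(\x)|^2\,d\x\le M$; the uniform bound $m(\gamma)\le m_*$ together with the $r$-uniform discreteness of $\Lattice$, which (since the balls $B_{r/2}(\gamma)$, $\gamma\in\Lattice$, are pairwise disjoint) forces the number of points of $\Lattice$ in a rectangle of sides $a\times b$ to be at most $C_r(a+1)(b+1)$; and, where convenient, the pointwise bound $|\w_{\gamma,a}(\x)|\le K\jp{\x-\gamma}^{-s}$ from Lemma~\ref{Rem:bounds}. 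All integrands below are non-negative, so every interchange of sum and integral is legitimate.

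\emph{The mass estimates \eqref{MassOut} and \eqref{MassIn}.} The first step is the elementary observation that, for any Borel set $\Omega\subset\R^2$, inserting the weight $\jp{\x-\xi}^{2s}$ and using the $s$-localization bound gives $\|\chi_\Omega\w_{\xi,c}\|\le\sqrt M\,\jp{\dist(\xi,\Omega)}^{-s}$. For \eqref{MassOut} I would apply this with $\Omega=\Lambda_L^c$ and $\xi\in\Lambda_L\cap\Lattice$, where $\dist(\xi,\Lambda_L^c)=\min_{i}(L-|\xi_i|)$, and decompose $\Lambda_L\cap\Lattice$ according to the integer part $k$ of $L-|\xi_i|$: the points with a given $k$ fill at most two strips of size $1\times 2L$, hence number at most $C_r L$ for $L\ge1$, while each contributes at most $m_*\sqrt M\,\jp{k}^{-s}$; summing the resulting series in $k$ (convergent since $s>1$) and over $i\in\{1,2\}$ yields \eqref{MassOut}. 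For \eqref{MassIn} I would take $\Omega=\Lambda_L$ and $\xi\in\Lattice\setminus\Lambda_L$, and decompose according to the integer part $k$ of $\dist(\xi,\Lambda_L)$: the $k$-th layer lies in $\Lambda_{L+k+1}\setminus\Lambda_L$, of area at most $C\big(L(k+1)+(k+1)^2\big)$, hence carries at most $C_r\big(L(k+1)+(k+1)^2\big)$ points, and the series $\sum_{k\ge0}\big(L(k+1)+(k+1)^2\big)\jp{k}^{-s}$ is bounded by a constant times $L$ for $L\ge1$ since $s>3$. The constants $\mo,\mi$ then depend only on $M,m_*,r,s$.

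\emph{The off-diagonal bound \eqref{FOffDiagonal} and the choice of $F$.} Write $R:=\|\gamma-\eta\|$. Using $|x_i-\gamma_i|\le\|\x-\gamma\|\le\jp{\x-\gamma}$, I would split the weight as $\jp{\x-\gamma}^{s}=\big(\jp{\x-\gamma}^{s-1}\|\x-\gamma\|\big)\cdot\jp{\x-\gamma}^{-(s-1)}$ and apply Cauchy--Schwarz: the first factor gives $\big(\int\jp{\x-\gamma}^{2(s-1)}\|\x-\gamma\|^2|\w_{\gamma,a}|^2\big)^{1/2}\le\sqrt M$ (because $\jp{\x-\gamma}^{s-1}\|\x-\gamma\|\le\jp{\x-\gamma}^{s}$), while for the second I split the domain into $\{\|\x-\eta\|\le R/2\}$ — where $\|\x-\gamma\|\ge R/2$, so the weight is $\le\jp{R/2}^{-2(s-1)}$ and the remaining mass is $\le1$ — and its complement, where the weight is $\le1$ and the mass is $\le M\jp{R/2}^{-2s}$ by $s$-localization. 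This gives $\big|\inner{\w_{\gamma,a}}{(X_i-\gamma_i)\w_{\eta,b}}\big|\le\sqrt{M(1+M)}\,\jp{R/2}^{-(s-1)}$, and summing over $a\le m_*$ and $b\le m_*$ one may take
\begin{equation*}
F(t):=m_*^2\,2^{s-1}\,\sqrt{M(1+M)}\;\jp{t}^{-(s-1)},
\end{equation*}
a continuous, radially decreasing function, as required to later feed it into Lemma~\ref{prop:SeriesToInt}. For the remaining conditions, a polar-coordinate computation shows $\int_{\R^2}F(\|\x\|)^{p}\,\|\x\|^{q}\,d\x<\infty$ whenever $p(s-1)-q>2$; hence \eqref{F1} holds for $s>3$, \eqref{F2} for $s>2$, and \eqref{F3} for $s>4$, all of which follow from $s>5$. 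Finally, for \eqref{F4} I would fix $\x\in\Lambda_L$ and integrate the radially decreasing $F^2$ over $\R^2\setminus\Lambda_L\subset\{\,\|\y-\x\|\ge\dist(\x,\partial\Lambda_L)\,\}$, obtaining a bound $\le C\,\jp{\dist(\x,\partial\Lambda_L)}^{-2(s-2)}$ (valid since $2(s-1)>2$); integrating in $\x$ by slicing $\Lambda_L$ along the level sets of $\dist(\cdot,\partial\Lambda_L)$, each of length $\le 8L$, produces $\le 8CL\int_0^{\infty}\jp{t}^{-2(s-2)}\,dt$, which is a constant times $L$ because $2(s-2)>1$; dividing by $L^2$ and letting $L\to\infty$ gives \eqref{F4}.

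\emph{The main obstacle.} There is no genuine analytic difficulty here; the care needed is entirely combinatorial — organizing the lattice sums so that, by $r$-uniform discreteness, they localize to a boundary layer of $\Lambda_L$ of bounded width and are therefore linear rather than quadratic in $L$ — together with tracking that $\mo,\mi$ and $F$ are independent of $L$. The only estimate that genuinely taxes the polynomial rate is \eqref{F3}, which already works for $s>4$; the hypothesis $s>5$ of Theorem~\ref{MainTheorem} (needed elsewhere, e.g.\ in Proposition~\ref{TraceClass2Proof}) therefore leaves room to spare in this proposition.
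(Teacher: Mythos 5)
Your argument is correct and follows the same overall architecture as the paper's proof (insert the localization weight to get mass estimates, then control the off-diagonal matrix elements of the reduced position operators), but you arrive at the off-diagonal bound by a genuinely different and sharper route. For \eqref{MassOut}--\eqref{MassIn} the paper invokes its generalized Maclaurin--Cauchy test (Lemma~\ref{prop:SeriesToInt} and Remark~\ref{rmk:SeriesToInt}), whereas you count points layer by layer using $r$-uniform discreteness directly; this is equivalent in spirit and works, though your area estimate $\abs{\Lambda_{L+k+1}\setminus\Lambda_L}\lesssim L(k+1)+(k+1)^2$ is crude compared to the sharp $O(L+k)$ for the annulus $\Lambda_{L+k+1}\setminus\Lambda_{L+k}$, which costs you a threshold $s>3$ where $s>2$ would suffice (harmless here).

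The more interesting divergence is in \eqref{FOffDiagonal}. The paper passes through the $L^\infty$ bound of Lemma~\ref{Rem:bounds}, writes $\abs{\inner{\w_{\gamma,a}}{(X_i-\gamma_i)\w_{\eta,b}}}\le K^2\int\jp{\x-\gamma}^{-(s-1)}\jp{\x-\eta}^{-s}\,d\x$, and splits off a $\jp{\x-\eta}^{-(2+\epsilon)}$ to make the integral converge, using the quasi-triangle inequality \eqref{eqn:GTriang}; this produces $F(t)=k_s\jp{t}^{-(s-2-\epsilon)}$. You instead stay at the $L^2$ level, insert the weight $\jp{\x-\gamma}^{\pm(s-1)}$ inside Cauchy--Schwarz, and estimate the second factor by splitting the domain at $\|\x-\eta\|=R/2$; this yields $F(t)\propto\jp{t}^{-(s-1)}$, one full power better, with no $\epsilon$ loss. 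Consequently your integrability thresholds are all strictly lower than the paper's (\eqref{F1} for $s>3$ vs.\ $s>4$, \eqref{F2} for $s>2$ vs.\ $s>3$, \eqref{F3} for $s>4$ vs.\ $s>5$, and \eqref{F4} for $s>5/2$), and you are right to observe that the binding constraint \eqref{F3} already holds for $s>4$. Since the paper identifies \eqref{F3} as the bottleneck responsible for $s_*>5$ in Theorem~\ref{MainTheorem}, your argument actually shows that this particular bottleneck can be relaxed; the residual $s>5$ constraint would then come from elsewhere (e.g.\ Proposition~\ref{TraceClass2Proof}), a point worth flagging to the authors. One small wording caveat: the phrase ``split the weight as $\jp{\x-\gamma}^{s}=(\jp{\x-\gamma}^{s-1}\|\x-\gamma\|)\cdot\jp{\x-\gamma}^{-(s-1)}$'' is not literally an identity — what you are doing is inserting $1=\jp{\x-\gamma}^{s-1}\jp{\x-\gamma}^{-(s-1)}$ into the integrand after bounding $|x_i-\gamma_i|\le\|\x-\gamma\|$ — but the Cauchy--Schwarz step that follows is correct as written.
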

The proof of Proposition~\ref{prop:ImpoEstimates}, postponed to Appendix~\ref{ProofSLoc}, shows that one can choose 
$F$ in the form $F(\|\x\|):= k_s \langle \x \rangle^{-(s-1)}$. However, we prefer to state the Proposition in the form above, to single out the properties of $F$ that will be used in the proof of Theorem~\ref{MainTheorem}. Indeed, we now prove the statement of Theorem~\ref{MainTheorem} using only the estimates \eqref{MassOut}-\eqref{F4}. For the sake of better readability, in the following the generic series $\sum_{\gamma \in \Lattice\cap \Lambda_L} \sum_{1\leq a \leq m(\ga)} A(\gamma,a)$ and $\sum_{\gamma \in \Lattice} \sum_{1\leq a \leq m(\ga)} A(\gamma,a)$ will be written shortly as $\sum_{\gamma\in \Lambda_L ,a}A(\gamma,a)$ and $\sum_{\gamma,a}A(\gamma,a)$ respectively.

Let us start by noticing that by a simple algebraic manipulation one has
\begin{align*}
&\chi_{\Lambda_L}\left(\left[\widetilde{X}_1,\widetilde{X}_2\right] - \left[\NP_1,\NP_2\right]\right)\\
&= \chi_{\Lambda_L}\left[\left(\widetilde{X}_1-\NP_1\right), \left(\widetilde{X}_2-\NP_2\right)\right] + \chi_{\Lambda_L} \left[\left(\widetilde{X}_1-\NP_1\right), \NP_2 \right] + \chi_{\Lambda_L} \left[ \NP_1, \left(\widetilde{X}_2-\NP_2\right) \right] \\
&=: T_1 + T_2 + T_3 \, .
\end{align*}
In the following we show that each of the traces $\Tr(T_i)$, $i \in \{1,2,3\}$, goes at most linearly in $L$ as $L \to \infty$.

First of all, note that all the terms in the commutators appearing in $\chi_{\Lambda_L} T_j$, $j \in \{1,2,3\}$ are trace class due to Proposition~\ref{TraceClass2Proof}.

Let us start by analysing the trace of $\chi_{\Lambda_L} T_2$. By substituting $\chi_{\Lambda_L} = 1 - \chi_{{\Lambda^c_L}}$ and by exploiting the fact that $P$ is an orthogonal projection and that the GWB is an orthonormal basis of $\Ran(P)$, we obtain
\begin{align*}
&\left|\sum\limits_{\xi,c}\sum\limits_{\gamma,a} \inner{\w_{\xi,c}}{\chi_{\Lambda_L}\w_{\gamma,a}}\inner{\w_{\gamma,a}}{ (X_1-\gamma_1) \w_{\xi,c}}\left(\xi_2- \gamma_2\right) \right|\\
&\leq\sum\limits_{\xi \in \Lambda_L , \, c} \sum\limits_{\gamma,a}   \delta_{\gamma,\xi} \delta_{a,c} | \inner{\w_{\gamma,a}}{(X_1-\gamma_1) \w_{\xi,c}}\left(\xi_2- \gamma_2\right)| \\
&\phantom{ \leq } +  \sum\limits_{\xi \in \Lambda_L , \, c} \sum\limits_{\gamma,a} \left|\inner{\chi_{\Lambda_L^c}\w_{\xi,c}}{\w_{\gamma,a}}\inner{\w_{\gamma,a}}{(X_1-\gamma_1) \w_{\xi,c}}\left(\xi_2- \gamma_2\right)\right| \\
&\phantom{ \leq } +  \sum\limits_{\xi \notin \Lambda_L \, , c} \sum\limits_{\gamma,a} \left|\inner{\chi_{\Lambda_L}\w_{\xi,c}}{\w_{\gamma,a}}\inner{\w_{\gamma,a}}{(X_1-\gamma_1) \w_{\xi,c}}\left(\xi_2- \gamma_2\right)\right|  =:T_{21}+T_{22}+T_{23} \, .
\end{align*}

The first series, namely $T_{21}$, is zero after the summation in $\gamma$. The series $T_{22}$ reads
\begin{equation}
\label{eq:auxEstimate}
\begin{aligned}
&  \sum\limits_{\xi \in \Lambda_L , \, c} \sum\limits_{\gamma,a} \left|\inner{\chi_{\Lambda_L^c}\w_{\xi,c}}{\w_{\gamma,a}}\inner{\w_{\gamma,a}}{(X_1-\gamma_1) \w_{\xi,c}}\left(\xi_2- \gamma_2\right)\right| \\
&\leq \sum\limits_{\xi \in \Lambda_L , \, c}   \| \chi_{\Lambda_L^c}\w_{\xi,c}  \|
\sum\limits_{\gamma \in \Lattice} F(\|\gamma-\xi\|) \|\gamma-\xi\| 
\leq  K_r I_3 \mo \,\, L \, ,
\end{aligned}
\end{equation}
where we have used \eqref{MassOut},  \eqref{FOffDiagonal} and \eqref{FOffDiagonal}, together with Lemma \ref{prop:SeriesToInt} to estimate the series with the integral.  Analogously, by using \eqref{FOffDiagonal}, Lemma \ref{prop:SeriesToInt}, \eqref{MassIn} and \eqref{F3} we get that $\left|T_{23}\right| \leq  K_r I_3 \mi  \,\, L $.

Therefore we have obtained an {upper bound} for the trace of $\chi_{\Lambda_L}T_2$, that is
$
\abs{ \Tr(\chi_{\Lambda_L} T_2)} \leq  C  L$ as $L \to \infty.$ An upper bound on the trace of $\chi_{\Lambda_L} T_3$ is obtained by a similar computation: it goes also at most linearly in $L$ as $L \to \infty$. Hence, both these terms do not contribute to the thermodynamic limit \eqref{TraceUnitVolume}.

It remains to estimate the trace of $\chi_{\Lambda_L}T_1$. By similar computations as the ones above, we can write $\chi_{\Lambda_L}T_1=:R_1+R_2+R_3$, where $R_i$, $i\in \left\{1,2,3\right\}$ are (a posteriori absolutely convergent) series such that $R_3$ contains the localization function $\chi_{\Lambda_L^c}$, $R_2$ contains the localization function $\chi_{\Lambda_L}$ and $R_1$ does not contain neither $\chi_{\Lambda_L}$ nor $\chi_{\Lambda_L^c}$. By using \eqref{MassIn},\eqref{MassOut} and \eqref{F1} in a similar way as in \eqref{eq:auxEstimate}, one can show that $R_2$ and $R_3$ are absolutely convergent series that go at most linearly in $L$, as $L \to \infty$. This means that the only contribution to the limit $L\to \infty$ can come from $R_1$. However, this is not the case as we now show.
Consider the series $R_1$:
\begin{equation}
\begin{aligned}
&R_1:=\sum\limits_{\gamma,a} \sum\limits_{\eta,b} \sum\limits_{\xi \in \Lambda_L , \, c} \delta_{\gamma,\xi} \delta_{a,c} \inner{\w_{\gamma,a}}{ (X_1-\gamma_1) \w_{\eta,b}}\inner{\w_{\eta,b}}{ (X_2-\eta_2) \w_{\xi,c}} \nonumber \\
&\quad - \sum\limits_{\gamma,a} \sum\limits_{\eta,b} \sum\limits_{\xi \in \Lambda_L , \, c}\delta_{\gamma,\xi} \delta_{a,c} \inner{\w_{\gamma,a}}{ (X_2-\gamma_2) \w_{\eta,b}}\inner{\w_{\eta,b}}{ (X_1-\eta_1) \w_{\xi,c}} \, .
\end{aligned}
\end{equation} 
By using Lemma \ref{prop:SeriesToInt} and \eqref{F2}, one easily gets that $R_1$ is absolutely convergent together with a non-optimal estimate (quadratic in $L$) for its sum. Let us now show that also $R_1$ goes at most linearly in $L$.
By using the shorthand notation $
	D(\eta,\xi,b,c):=\inner{\w_{\xi,c}}{ (X_1-\xi_1) \w_{\eta,b}}\inner{\w_{\eta,b}}{ (X_2-\eta_2) \w_{\xi,c}}$, $R_1$ can be written as 
\begin{align*}
R_1&=\sum\limits_{\eta,b} \sum\limits_{\xi \in \Lambda_L , \, c} D(\eta,\xi,b,c) - \sum\limits_{\eta \in \Lambda_L,b} \sum\limits_{\xi,c} D(\eta,\xi,b,c)  \\
\label{FinalSeries}
&=
\sum\limits_{\eta \in \Lattice \setminus \Lambda_L, \, b}\,\, \sum\limits_{\xi \in \Lambda_L , \, c} D(\eta,\xi,b,c) - \sum\limits_{\eta \in \Lambda_L,b} \,\,\sum\limits_{\xi \in \Lattice \setminus \Lambda_L, \, c} D(\eta,\xi,b,c) \, .
\end{align*}

Notice that $|D(\eta,\xi,b,c)| \leq F^2(\|\eta-\xi\|)\,$.
For our purposes, we just need to study the asymptotics of  the absolute value  of the series of $R_1$, namely
\begin{align*}
|R_1|&\leq \sum\limits_{\eta \in \Lattice \setminus \Lambda_L, \; b}\,\, \sum\limits_{\xi \in \Lambda_L , \, c}\left| D(\eta,\xi,b,c)\right| + \sum\limits_{\eta \in \Lambda_L,b}\,\, \sum\limits_{\xi \in \Lattice \setminus \Lambda_L, \; c} \left|D(\eta,\xi,b,c)\right| \\
 &\leq 2 \sum\limits_{\eta \in \Lattice \setminus \Lambda_L}\,\, \sum\limits_{\xi \in \Lambda_L }F^2(\|\eta-\xi\|). 
\end{align*}

From the proof of Proposition \ref{prop:ImpoEstimates} it is clear that $F$
is not in $\ell^1(\Lattice \times \Lattice)$. As we cannot invoke Lebesgue dominated convergence theorem in order to perform the limit $L \to \infty$, we explicitly estimate the series with Lemma \ref{prop:SeriesToInt}, obtaining that
\begin{equation}
\lim_{L \to \infty} \frac{1}{4 L^2}\sum\limits_{\eta \in \Lattice \setminus \Lambda_L}\,\, \sum\limits_{\xi \in \Lambda_L }  F^2(\|\eta-\xi\|) \leq K^2_r \lim\limits_{L \to + \infty} \frac{1}{ 4 L^2} \int_{\Lambda_L} d \x \, \int_{\R^2 \setminus \Lambda_L} d \y \, F^2(\|\x-\y\|)  
\end{equation}
where the limit vanishes in view of \eqref{F4}. Taking into account the last estimate, and all the previous ones, we get \eqref{eq:ProofExpl}, and the proof is concluded.   \qed

\appendix
\section{Proof of Proposition \ref{prop:CIndex}}
\label{AppendixChern}

Before starting with the proof of Proposition \ref{prop:CIndex}, let us recall some known facts about the index 
of a pair of projections. First, let ${\bf p} \in \R^2$ and $U_{\bf p}$ be the multiplication operator associated with the function $U(\cdot - {\bf p})$, clearly $U_0 \equiv U$. As it is proved in \cite{AvronSeilerSimon1994}, we have that, for every ${\bf p} \in \R$
\begin{equation}
\label{eq:TIindex}
\begin{aligned}
&\operatorname{Index}(P,U_{\bf p}PU^*_{\bf p})=\operatorname{Index}(P,UPU^*)=\Tr((P-UPU^*)^3)=\\ &\int_{\R^{6}} d \x d \y d \z P(\x, \y) P(\y, \z) P(\z, \x) \left(1-\frac{U(\x)}{U(\y)}\right)\left(1-\frac{U(\y)}{U(\z)}\right)\left(1-\frac{U(\z)}{U(\x)}\right) .
\end{aligned}
\end{equation}
In particular, the first equality shows that the index \eqref{eq:index} is translation invariant.

A second crucial ingredient of the proof of Proposition \ref{prop:CIndex} is the Connes area formula \cite[Lemma 9.2]{Connes}  regarding the area of the oriented triangle spanned by three points in the plane. For every ${\bf p},\x,\y \in \R^2$, let $\sin({\angle}(\x,{\bf p},\y))$ be the sinus of the angle of view ${\angle}(\x,{\bf p},\y) \in (-\pi,\pi)$ from ${\bf p}$ of $\x$ relative to $\y$ (which is the angle formed by the segment $\y-{\bf p}$ and $\x-{\bf p}$). One can easily check that 
$$
\begin{aligned}
&\left(1-\frac{U(\x-{\bf p})}{U(\y-{\bf p})}\right)\left(1-\frac{U(\y-{\bf p})}{U(\z-{\bf p})}\right)\left(1-\frac{U(\z-{\bf p})}{U(\x-{\bf p})}\right) = \\ 
&-2 \iu\left(\sin{\angle}(\x,{\bf p},\y) +\sin {\angle}(\y,{\bf p},\z) + \sin{\angle}(\z,{\bf p},\x) \right)
=: S({\bf p},\x,\y,\z) .
\end{aligned}
$$
Then, the Connes area formula gives
$$
\begin{aligned}
&\int_{\R^2} d {\bf p} \,\,  S({\bf p},\x,\y,\z) 
&= 2 \pi \iu  (\x-\y)\wedge(\y-\z)
\end{aligned}
$$
where $\frac{1}{2} (\x-\y)\wedge(\y-\z)$ is the oriented area of the triangle formed by $\x,\y,\z$.

The proof of Proposition \ref{prop:CIndex} exploits the translation invariance of the index, expressed by \eqref{eq:TIindex}, jointly with the Connes area formula, following essentially the same strategy used in \cite{ElgartGrafSchenker} for discrete models. 

\begin{proof}[Proof of Proposition \ref{prop:CIndex}]
To shorten the notation, let $N:=\operatorname{Index}(P,UPU^*)$ and $N_{\bf p}:=\operatorname{Index}(P_{\bf p},U_{\bf p}PU_{\bf p}^*)$. Moreover, we will denote by $K$ the irrelevant strictly positive constants appearing in the proof. From \eqref{eq:TIindex} we have $N_{\bf p}=N$ for every ${\bf p} \in \R^2$, hence
$$
\begin{aligned}
N &= |\Lambda_L|^{-1} \int_{ \Lambda_L} d {\bf p} \, N_{\bf p}  \\
&=|\Lambda_L|^{-1}\int_{\Lambda_L} d {\bf p} \int_{\R^{2}} d \x \int_{\R^2} d \y \int_{\R^2} d \z \, P(\x, \y) P(\y, \z) P(\z, \x) S({\bf p},\x,\y,\z) \\
&=: |\Lambda_L|^{-1}\int_{\Lambda_L} d {\bf p} \int_{\R^{2}} d \x \, f({\bf p},\x).
\end{aligned}
$$
By suitable estimates on the decay of $f$, we have that  $f$ is integrable both in $\R^2 \times \Lambda_L$ and $\Lambda_L \times \R^2$, see \eqref{fxp}. Therefore, the proof is reduced to show that it is possible to exchange the role of the integral in ${\bf p}$ with the integral in $\x$ 
up to an error of order $L^{-1/2}$, which eventually goes to zero by taking the limit $L \to \infty$. More precisely, by adding and subtracting the same term, we get
$$
\begin{aligned}
\int_{\Lambda_L} d{\bf p} \int_{\R^{2}} d \x f({\bf p},\x) &=
\int_{\R^{2}}  d {\bf p} \int_{\Lambda_L} d\x \, f({\bf p},\x) 
- \int_{\R^2\setminus \Lambda_L} d {\bf p} \int_{\Lambda_L} d \x \, f({\bf p},\x) \\
&\phantom{=}+  \int_{\Lambda_L} d{\bf p} \int_{\R^2 \setminus \Lambda_L} d \x f({\bf p},\x)  
=:\int_{\Lambda_L} d\x \, \int_{\R^{2}}  d {\bf p} \, f({\bf p},\x) + R_1 + R_2 .
\end{aligned}
$$
By using the exponential localization of the integral kernel of $P$, we now prove a decay estimate on $f$; after that, exploiting such estimate, we show that the error terms $R_1$ and $R_2$ are of actually of order $L\sqrt{L}$.

First, by using the trivial inequality $|S({\bf p},\x,\y,\z)|\leq 3$ and the exponential localization of the integral kernel of the projection, we get that $f$ is uniformly bounded, that is $|f({\bf p},\x)|\leq K$, for all ${\bf p},\x \in \R^2$.

Assume now that $|\x-{\bf p}|\geq 1$. We first consider the contribution to the integral in case either $\y$ or $\z$ are outside the ball of radius $|{\bf p}-\x|$ centered in $\x$, denoted by $B_{|{\bf p}-\x|}(\x)$. We can bound such contribution by
$$
\begin{aligned}
&3  \int_{\y \notin B_{|{\bf p}-\x|}(\x) } d \y \int_{\R^2} d \z |P(\x, \y) P(\y, \z) P(\z, \x)| \\
&+ 3 \int_{ \R^2 } d \y \int_{\z \notin B_{|{\bf p}-\x|}(\x)} d \z |P(\x, \y) P(\y, \z) P(\z, \x)| \leq \\
& \leq K_1 \int_{|\y-\x| \geq |{\bf p}-\x|} d \y |P(\x,\y)|  
\leq K \eu^{-\frac{\beta}{2} |{\bf p}-\x|}
\end{aligned}
$$
where we have used that $P$ has an exponentially localized integral kernel. It remains to control the contribution coming from the points where both $\y$ and $\z$ are inside $B_{|{\bf p}-\x|}(\x)$. By a geometric argument, one can prove that the $\angle(\y,{\bf p},\x)$ and $\angle(\z,{\bf p},\x)$ must be in $(-\frac{\pi}{2},\frac{\pi}{2})$. Then, consider the following two estimates: for $\alpha,\beta \in (-\frac{\pi}{2},\frac{\pi}{2})$
$$
\begin{aligned}
&|\sin(\alpha)+\sin(\beta)-\sin(\alpha+\beta)| = |\sin(\alpha)(1-\cos(\beta))+\sin(\beta)(1-\cos(\alpha))|\leq  \\
&|\sin(\alpha)|^2|\sin(\beta)| + |\sin(\beta)|^2|\sin(\alpha)|
\end{aligned}
$$
and, for every $\y$ such that $|\y-\x| < |{\bf p}-\x|$, we obtain $
|\sin(\angle(\y,{\bf p},\x))| \leq |\y-\x||{\bf p}-\x|^{-1}$. Therefore, we get
\begin{equation}
\label{eq:auxS}
|S({\bf p},\x,\y,\z)| \leq  \frac{|\y-\x|^2|\x-\z| + |\y-\x||\x-\z|^2 }{|{\bf p}-\x|^3} .
\end{equation}
Hence, by using \eqref{eq:auxS} together with the exponential localization of $P$, we obtain 
$$
\begin{aligned}
|f({\bf p},\x)| 
 \leq K \frac{1}{|\x-{\bf p}|^3} \int_{\R^2} d \y \int_{\R^2} d \z \, \eu^{-\frac{\beta}{2}|\x-\y|} \eu^{-\frac{\beta}{2}|\z-\x|} \leq K \frac{1}{|\x-{\bf p}|^3} .
\end{aligned}
$$
For $|\x-{\bf p}|\geq 1$, we have $\frac{\langle |\x-{\bf p}|\rangle}{|\x-{\bf p}|} \leq 2$. Therefore, putting together all the previous estimates, we obtain
\begin{equation}
\label{fxp}
|f({\bf p},\x)|\leq K_2 \chi_{|\x-{\bf p}|\leq 1}(\x,{\bf p}) + K_3 \chi_{|\x-{\bf p}|> 1}(\x,{\bf p}) \langle \x-{\bf p} \rangle^{-3} \leq K \langle \x-{\bf p} \rangle^{-3}  \, .
\end{equation}

By simple estimates and integration exploiting \eqref{fxp}, one can show that the error terms are of order $L\sqrt{L}$ as $L \to \infty$, thus we obtain
$$
\begin{aligned}
	&\int_{\Lambda_L} d{\bf p} \int_{\R^{2}} d \x \int_{\R^2} d \y \int_{\R^2} d z P(\x, \y) P(\y, \z) P(\z, \x) S({\bf p},\x,\y,\z) =\\
	&=\int_{\Lambda_L} d\x \int_{\R^{2}} d{\bf p} \int_{\R^2} d \y \int_{\R^2} d \z P(\x, \y) P(\y, \z) P(\z, \x) S({\bf p},\x,\y,\z) + \mathcal{O}(L\sqrt{L}) .
\end{aligned}
$$
Finally, by performing first the integral with respect to ${\bf p}$ and then using Connes area formula the proof is concluded.
\end{proof}

\section{Technical results}

\subsection{Generalized Maclaurin--Cauchy test}
\label{Sec:technical2}

The proof of Theorem~\ref{MainTheorem} is based on the estimates of series evaluated on points of the discrete set $\Lattice$. Because of that, it is useful to have an efficient tool to estimate the value of the series we are interested in. The next lemma concerns a generalized Maclaurin--Cauchy estimate that serves exactly this purpose.
\begin{lemma}[{Generalized Maclaurin--Cauchy test}]
	\label{prop:SeriesToInt}
	Let $\Lattice \subset \R^2$ be a $r$-uniformly discrete set. Consider a continuous function $$D: \R^2 \to \R \, , $$ such that $|D(\x)| \geq  |D(\y)|$ whenever $|\x|\leq |\y|$. Then, there exists a constant $K_r$, depending on $r$ but independent of $L$, such that for every $L>2r$ it holds that
	\begin{equation}
	\label{SeriesToInt}
	\sum_{\gamma \in \Lattice \cap \Lambda_L } |D(\gamma) | \leq K_r \int_{\Lambda_L}  d\x |D(\x)| \, .
	\end{equation}	
\end{lemma}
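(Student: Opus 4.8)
The plan is to compare the sum over lattice points with the integral by exploiting the $r$-uniform discreteness, which guarantees that distinct lattice points are well separated, together with the monotonicity of $D$ (it is non-increasing along rays, in the sense $|D(\x)|\geq |D(\y)|$ whenever $\|\x\|\leq\|\y\|$). The key observation is that if $\gamma\in\Lattice$, then the ball $B_{r/2}(\gamma)$ contains only $\gamma$ among the lattice points, and these balls are pairwise disjoint as $\gamma$ ranges over $\Lattice$; hence each value $|D(\gamma)|$ can be dominated by a suitable multiple of the integral of $|D|$ over a fixed neighborhood of $\gamma$, and summing over $\gamma$ gives a controlled overlap.

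First I would fix $\gamma\in\Lattice\cap\Lambda_L$ and estimate $|D(\gamma)|$ from above by an average of $|D|$ over a small ball. The subtlety is that $|D|$ is only monotone in $\|\cdot\|$ around the origin, so I cannot directly say $|D(\x)|\geq|D(\gamma)|$ for $\x$ near $\gamma$: points in $B_{r/2}(\gamma)$ can have slightly larger norm than $\gamma$. I would instead use the point $\gamma^- := (1 - \tfrac{r}{2\|\gamma\|})\gamma$ (the point on the segment from $\gamma$ to the origin at distance $r/2$ inside), so that any $\x \in B_{r/2}(\gamma^-)$ satisfies $\|\x\| \leq \|\gamma^-\| + r/2 \leq \|\gamma\|$, hence $|D(\x)| \geq |D(\gamma)|$ by the monotonicity hypothesis. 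Therefore
$$
|D(\gamma)| \;\leq\; \frac{1}{|B_{r/2}|}\int_{B_{r/2}(\gamma^-)} |D(\x)|\,d\x \;=\; \frac{1}{\pi (r/2)^2}\int_{B_{r/2}(\gamma^-)} |D(\x)|\,d\x .
$$
For lattice points with $\|\gamma\|$ small (say $\|\gamma\|\leq r$) one handles finitely many terms separately, or simply shrinks the radius; this is a routine adjustment and not the crux.

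Next I would sum over $\gamma \in \Lattice\cap\Lambda_L$. The shifted balls $B_{r/2}(\gamma^-)$ are all contained in a fixed dilate of $\Lambda_L$, namely $\Lambda_{L+r}$, provided $L>2r$; and crucially they have \emph{bounded overlap}: since the centers $\gamma$ are $r$-separated, a point $\x$ can lie in $B_{r/2}(\gamma^-)$ for only a bounded number $N_r$ of indices $\gamma$ (one bounds $N_r$ by a volume-packing argument, $N_r \leq (\text{const})/r^{0}$ is actually a pure constant independent of $r$ by scaling). Hence
$$
\sum_{\gamma\in\Lattice\cap\Lambda_L} |D(\gamma)| \;\leq\; \frac{N_r}{\pi(r/2)^2}\int_{\Lambda_{L+r}} |D(\x)|\,d\x .
$$
Finally, to replace $\Lambda_{L+r}$ by $\Lambda_L$, I would again use monotonicity: the annular region $\Lambda_{L+r}\setminus\Lambda_L$ consists of points of norm at least $L$, while $\Lambda_L$ contains, e.g., the ball $B_L(0)$ of comparable measure on which $|D|$ is at least as large as its values on the annulus; a crude comparison of the two integrals, using $|\Lambda_{L+r}\setminus\Lambda_L| \leq C r L$ against $|B_{L}(0)| = \pi L^2$ and the fact that the infimum of $|D|$ on $\Lambda_L$ dominates the supremum of $|D|$ on the annulus, absorbs the extra region into a constant factor. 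Collecting all constants yields $K_r$ depending only on $r$, as claimed. The main obstacle is the mismatch between the \emph{radial} monotonicity of $D$ and the need to integrate over balls centered at lattice points rather than at radially-inward shifts; once the shift trick above is in place, everything else is elementary volume counting.
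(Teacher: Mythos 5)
Your proof is correct and follows essentially the same approach as the paper: both arguments compare $|D(\gamma)|$ to the average of $|D|$ over a small ($r$-scale) region shifted \emph{toward the origin} from $\gamma$, so that radial monotonicity guarantees $|D|\ge |D(\gamma)|$ there, and then sum over $\gamma$ using the uniform discreteness to control overlaps; the paper uses open squares of area $r^2/2$ with vertex at $\gamma$, which it arranges to be pairwise disjoint and contained in $\Lambda_L$, while you use balls $B_{r/2}(\gamma^-)$ that may spill into $\Lambda_{L+r}\setminus\Lambda_L$, which you then absorb by a second comparison. Two small remarks: under the paper's definition of $r$-uniform discreteness, distinct points of $\Lattice$ are actually $\ge 2r$ apart (not $r$), from which $\|\gamma^- - \eta^-\| \ge r$ and your shifted balls are in fact pairwise \emph{disjoint}, so the bounded-overlap count $N_r$ equals $1$. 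Also, the sentence ``the infimum of $|D|$ on $\Lambda_L$ dominates the supremum of $|D|$ on the annulus'' is literally false — $\inf_{\Lambda_L}|D|$ is attained at the corners of $\Lambda_L$, which have Euclidean norm $L\sqrt{2}$, exceeding the norm $L$ of the nearest points of the annulus — but the comparison you actually intend, via $B_L(0)\subset\Lambda_L$ where $\inf_{B_L(0)}|D| \ge \sup_{\Lambda_{L+r}\setminus\Lambda_L}|D|$, does work and yields the needed factor $O(r/L)$, so the argument goes through.
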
   
\begin{proof}
	The proof is based on the same argument of the well-known Maclaurin--Cauchy integral test. First of all, by the hypothesis on $\Lattice$, it holds that
	$$
	\Lattice \cap B_{r}(\gamma) = \set{\gamma} \; , \qquad  \forall \gamma \in \Lattice.
	$$
	Fix $\rho=2r$. By hypothesis on $\Lattice$, the number of points of $\Lattice$ such that $|\gamma|<\rho$ is finite, so their contribution to the series is finite, say $K_\rho \in \R_+$. Hence one has
	\begin{equation}
	\label{AuxMC1}
	\sum_{\gamma \in \Lattice\cap \Lambda_L} |D(\gamma) |= K_\rho + \sum_{\gamma \in \Lattice\cap \Lambda_L, |\gamma|\geq \rho } |D(\gamma) |\, .
	\end{equation}
	For every point $\gamma \in \Lattice\cap \Lambda_L $ such that $|\gamma| \geq \rho$, 
	one constructs a square $A_r(\gamma)$ of area $\frac{r^2}{2}$ such that one of its vertices is $\gamma$ and for all $\x\in A_r(\gamma)$ it holds that $|\x| \leq |\gamma|$. (For example, $A_r(\gamma)$ might be the open square of diagonal length equal to $r$ constructed along the line passing through the origin and $\gamma$).
	It is also true that 
	$$
	A_r(\gamma) \cap \Lattice =  \set{\gamma} \, , \qquad A_r(\gamma) \subset \Lambda_L \, .
	$$
	Therefore, we obtain that
	\begin{align*}
	\sum_{\gamma \in \Lattice \cap \Lambda_L, |\gamma|\geq \rho } |D(\gamma) |&= \frac{2}{r^2}\sum_{\gamma \in \Lattice\cap \Lambda_L, |\gamma|\geq \rho } |D(\gamma) | \frac{r^2}{2} \\
	&\leq \frac{2}{r^2} \sum_{\gamma \in \Lattice\cap \Lambda_L, |\gamma|\geq \rho } 
	\int_{ A_r(\gamma)} \; d\x \,\, |D(\x)| \leq \frac{2}{r^2} \int_{ \Lambda_L} d \x  |D(\x)| \, .
	\end{align*}
	Then, considering \eqref{AuxMC1} and $\Lambda_\rho \subseteq \Lambda_L$, we get that
	\begin{equation} \label{Final_est}
	\sum_{\gamma \in \Lattice\cap \Lambda_L} |D(\gamma) |\leq \left(  \frac{r^2 K_\rho }{2}\left(\int_{ \Lambda_\rho} d \x 
	\, |D(\x)|\right)^{-1} + 1\right)  \frac{2}{r^2} \int_{ \Lambda_L} d \x  |D(\x)| \, .
	\end{equation}
	This proves the claim, with the constant $K_r$ given by the bracketed expression exhibited in the l.h.s of \eqref{Final_est}. 
\end{proof}

\begin{rmk}\label{rmk:SeriesToInt}
	\label{rmk:generalizedMacLCTest} The result of Lemma \ref{prop:SeriesToInt}, namely inequality \eqref{SeriesToInt}, still holds true if, instead of the radial monotonicity of the function $D: \R^2 \to \R$, we require only a directional monotonicity, that is $|D(\x)| \geq  |D(\y)|$ whenever $|\x_i|\leq |\y_i|$, for $i=1$ or $i=2$. The strategy of the proof is exactly the same, one just need to replace the square $A_r(\gamma)$ with another suitable $r$-dependent set such that $A_r(\gamma) \subset \Lambda_L$, $|A_r(\gamma)|=\frac{r^2}{2}$, and $|\x_i|\leq |\gamma_i|$ for all $\x \in A_r(\gamma)$.
\end{rmk}

\subsection{Properties of the operators $\NP_i$}

In this section we collect the properties of the operators $\NP_i$ that are used in the proof of Theorem \ref{MainTheorem}.

\begin{lemma} Let $\Lattice$ be a $r$-uniformly discrete set, and $\set{\w_{\ga, a}}_{\gamma \in \Lattice, 1 \leq a \leq m(\gamma)}$ a GWB $G$-localized around $\Lattice$. Then, the operators $\NP_i$ defined by \eqref{LatticePosition} are integral operators. Moreover
	\begin{enumerate} [label=(\roman*),ref=(\roman*)]
		\item \label{item:kGammaexp}  if $G(\|\x\|)=\E^{2\alpha \|\x\|}$ with $\alpha<\beta$ for $\beta$ as in \eqref{DiagLocIntKErnel}, then the integral kernel $\NP_i(\x,\y)$ satisfies 
		$$
		\left| \NP_i(\x,\y) \right| \leq C \E^{-\beta' |\x-\y|} + C' |x_i| \expo^{-\beta' | \x-\y |} 
		$$
		for $C,C' >0$ and $\beta'<\alpha$.
		\item \label{item:kGammapol} if $G(\|\x\|)=\langle \x \rangle^{2s}$ with $s>\frac{3}{2}$, then the integral kernel $\NP_i(\x,\y)$ satisfies
		$$
		\left| \NP_i(\x,\y) \right| \leq 
		C \langle \x-\y \rangle^{-(s-\frac{3}{2}-\epsilon)} + C' |x_i| \, \langle \x - \y \rangle^{-(s-1-\epsilon)}
		$$
		for every $0<\epsilon<s-\frac{3}{2}$.
	\end{enumerate}
\end{lemma}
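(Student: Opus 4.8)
The plan is to write out the integral kernel of $\NP_i$ from its defining series \eqref{LatticePosition}, which by \eqref{braket2} reads
$$
\NP_i(\x,\y) = \sum_{\gamma \in \Lattice} \sum_{1\leq a \leq m(\gamma)} \gamma_i \, \w_{\gamma,a}(\x) \, \overline{\w_{\gamma,a}(\y)},
$$
and then to bound it termwise. The natural trick is to write $\gamma_i = x_i - (x_i - \gamma_i)$, so that
$$
\abs{\NP_i(\x,\y)} \leq \abs{x_i}\, \abs{P(\x,\y)} + \sum_{\gamma,a} \abs{x_i - \gamma_i}\, \abs{\w_{\gamma,a}(\x)}\, \abs{\w_{\gamma,a}(\y)}.
$$
The first summand is already of the desired form: $\abs{x_i}\abs{P(\x,\y)} \leq C \abs{x_i} \E^{-\beta\|\x-\y\|}$ by \eqref{DiagLocIntKErnel}, which dominates both the claimed $\abs{x_i}\E^{-\beta'\abs{\x-\y}}$ term in \ref{item:kGammaexp} and the $\abs{x_i}\jp{\x-\y}^{-(s-1-\epsilon)}$ term in \ref{item:kGammapol}. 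So the real work is the second sum, call it $R_i(\x,\y)$.

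For $R_i$ I would first replace the pointwise values of the Wannier functions by the $L^\infty$ bound of Lemma~\ref{Rem:bounds}: $\abs{\w_{\gamma,a}(\x)} \leq K\, G(\|\x - \gamma\|)^{-1/2}$, and similarly at $\y$. This gives
$$
R_i(\x,\y) \leq K^2 m_* \sum_{\gamma \in \Lattice} \abs{x_i - \gamma_i}\, G(\|\x-\gamma\|)^{-1/2}\, G(\|\y - \gamma\|)^{-1/2}.
$$
Since $\abs{x_i - \gamma_i} \leq \|\x - \gamma\|$, and using the submultiplicativity-type bound $\|\x-\gamma\| G(\|\x-\gamma\|)^{-1/2} \leq G(\|\x-\gamma\|)^{-1/2+\text{(something)}}$ — more precisely, for the exponential case $t\,\E^{-\alpha t} \leq C_{\beta'} \E^{-\beta' t}$ for any $\beta' < \alpha$, and for the polynomial case $t\,\jp{t}^{-s} \leq \jp{t}^{-(s-1)}$ — one reduces to estimating a lattice sum of the form $\sum_{\gamma \in \Lattice} F(\|\x-\gamma\|)\,G(\|\y-\gamma\|)^{-1/2}$ with $F$ a decaying function. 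The standard move is: split by whether $\gamma$ is closer to $\x$ or to $\y$; on the half where $\|\x - \gamma\| \geq \tfrac12\|\x-\y\|$ pull out one factor $F(\tfrac12\|\x-\y\|)$ (resp. $G(\cdots)^{-1/4}$ evaluated at $\tfrac12\|\x-\y\|$) and bound the remaining sum over $\gamma$ by a convergent integral via the Maclaurin--Cauchy estimate of Lemma~\ref{prop:SeriesToInt} (after possibly replacing $F$ by a radially monotone majorant); symmetrically on the other half. Convergence of the residual integral $\int_{\R^2} G(\|\z\|)^{-1/2}\,d\z$ is exactly where the dimensional constraints enter: in the polynomial case this requires $s - \tfrac12 \cdot(\text{exponent used}) > 1$, i.e. one loses roughly $\tfrac32$ powers of $s$ from the $d=2$ area element plus the split, producing the $s - \tfrac32 - \epsilon$ in \ref{item:kGammapol}; the $\epsilon$ absorbs the borderline logarithmic divergence when one insists on a clean power-law majorant. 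In the exponential case there is no such loss except an arbitrarily small deterioration of the rate, giving any $\beta' < \alpha$.

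The main obstacle I anticipate is bookkeeping the exponents carefully through the two nested reductions — the $\|\x-\gamma\|$-to-decay conversion and then the split-and-integrate step — so that the final rates come out as stated ($\beta' < \alpha$ in the exponential case, $s - \tfrac32 - \epsilon$ and $s - 1 - \epsilon$ in the polynomial case) rather than something weaker; in particular one must be slightly careful that applying Lemma~\ref{prop:SeriesToInt} requires a radially (or directionally) monotone integrand, so the decaying majorants chosen at each stage must be taken monotone, and the factor $\abs{x_i}$ (which is \emph{not} decaying in $\x$) must be peeled off \emph{before} invoking the lattice-sum-to-integral comparison, which is precisely why the $\gamma_i = x_i - (x_i-\gamma_i)$ decomposition is done at the very start. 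Everything else is a routine, if tedious, estimate; no new idea beyond Lemmas~\ref{Rem:bounds} and~\ref{prop:SeriesToInt} and the triangle inequality \eqref{eqn:GTriang} for the localization function is needed.
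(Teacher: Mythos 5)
Your proposal is correct and follows essentially the same route as the paper: the same decomposition $\gamma_i = x_i + (\gamma_i - x_i)$, the same $L^\infty$ bound of Lemma~\ref{Rem:bounds}, the same use of the triangle-type property~\eqref{eqn:GTriang} of $G$, and the same Maclaurin--Cauchy comparison (Lemma~\ref{prop:SeriesToInt}) to control the lattice sums. The one cosmetic shortcut you take --- bounding the $|x_i|$ piece directly by $|x_i|\,|P(\x,\y)|$ via the pointwise kernel identity~\eqref{braket2} rather than re-estimating the lattice sum as the paper does --- is sound, since the GWFs are automatically continuous for an exponentially localized projection, but it is a minor variation rather than a different method, and your ``closer to $\x$ or to $\y$'' split plays the role of the paper's H\"older/peeling step while producing the same exponents.
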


\begin{proof}
	Let us show only the proof of \ref{item:kGammapol}, since the  proof of \ref{item:kGammaexp} is similar and simpler. The formal integral kernel of $\NP_i$ is given by
	$$
	\NP_i(\x,\y)= \sum_{\gamma,a} \gamma_i \w_{\gamma,a}(\x) 
	\overline{\w_{\gamma,a}}(\y) \, .
	$$
	For every fixed $\x,\y \in \R^2$ the sum over the indices $\{\gamma,a\}$ is absolutely convergent. Indeed, fix $0<\epsilon<s- \frac{3}{2}$, since $\left\|\x-\gamma \right\| \langle \x - \gamma \rangle^{-1}  \leq 1$, we get that
	\begin{equation}
	\label{NPIntKernel_SLoc}
	\begin{aligned}
	\left| \NP_i(\x,\y) \right| &\leq \sum_{\gamma,a} \langle \x - \gamma \rangle^{-(s-1)} \langle \y - \gamma \rangle^{-s}  + \sum_{\gamma,a} |x_i| \langle \x - \gamma \rangle^{-s} \langle \y - \gamma \rangle^{-s}  \\
	& \leq C_s \langle \x - \y \rangle^{-(s-\frac{3}{2}-\epsilon)}    \sum_{\gamma,a} \langle \y - \gamma \rangle^{-(\frac{3}{2}+\epsilon)} \langle \x - \gamma \rangle^{-(\frac{1}{2}+\epsilon)} \\
	&\phantom{\leq} + C_s |x_i| \, \langle \x - \y \rangle^{-(s-1-\epsilon)} \sum_{\gamma,a}   \langle \y - \gamma \rangle^{-(1+\epsilon)} \langle \x - \gamma \rangle^{-(1+\epsilon)}  \\
	&\leq C \langle \x-\y \rangle^{-(s-\frac{3}{2}-\epsilon)} + C' |x_i| \, \langle \x - \y \rangle^{-(s-1-\epsilon)} \, ,
	\end{aligned}
	\end{equation}
	where again in the first inequality we have used $L^\infty$ estimate on the GWF \eqref{LinftyEstimate}, in the second inequality we have used the property \eqref{eqn:GTriang} of the localization function $G$ (denoting by $C_s$ the constant $C_G$ appearing in \eqref{eqn:GTriang}) and in the last inequality we have used H\"older's inequality. Hence the operators $\NP_i$ admit an integral kernel.
\end{proof}
As a consequence of the previous estimates on the integral kernels, the operators $\NP_i$ enjoy some trace class properties. 
\begin{prop} 
	\label{TraceClass2Proof}
	Let $P$ be an exponentially localized projection and $X_i$, $\NP_i$, for $i\in \left\{1,2\right\}$ and $\chi_{\Lambda_L}$ as defined above. 
	Assume that $P$ admits a GWB,  $G$-localized around a $r$-uniformly discrete set $\Lattice$,  
	with localization function $G(\|\x\|) = \jp{\x}^{2s}$ for some $s>\frac{7}{2}$. Then, for every $i,j \in \{1,2\}$ the operators
		$$
		\chi_{\Lambda_L} P X_i P \; \Gamma_j \; \; , \chi_{\Lambda_L} P \; \Gamma_i P \; X_j \; \; , \chi_{\Lambda_L} P \; \Gamma_i P \; \Gamma_j \; ,
		$$ 
		are trace class in $L^2(\R^2)$.
\end{prop}

\begin{proof}
	The strategy of the proof is the same of Proposition~\ref{TraceClassProof}. Let us show the computations explicitly for the operator $\chi_{\Lambda_L}\; P \;\NP_1 \; P \; X_2$, the other cases can be treated similarly.
	
	Since $\Gamma_1$ commutes with $P$, and $P^2=P$, we have that
	\begin{equation*}
	\begin{aligned}
	&\chi_{\Lambda_L}\; P \;\NP_1 \; P \; X_2 = \chi_{\Lambda_L}\; P \;P \;\NP_1 \; X_2 = \chi_{\Lambda_L}\; P \;\NP_1 \; X_2 \;   \\
	& =  \chi_{\Lambda_L}\; \E^{2\alpha\|X\|}\; \E^{-2\alpha\|X\|} \;P\; \E^{\alpha\|X\|}\; \E^{-\alpha\|X\|}\; \NP_1 \; \langle X \rangle\; \langle X \rangle^{-1} \; X_2  \, ,
	\end{aligned}
	\end{equation*}
	where we have chosen the constant $\alpha$ strictly smaller than the exponent $\beta $ appearing in \eqref{DiagLocIntKErnel}. Then, by using the estimate \eqref{DiagLocIntKErnel} and triangular inequality, we have that
	\begin{align*}
	\left| \left( \E^{-2\alpha\|X\|}\; P  \; \E^{\alpha\|X\|} \right) (\x,\y) \right| \leq C \E^{-\alpha \|\x\|} \E^{-(\beta-\alpha) \|\x-\y\|}
	\end{align*}
	for some positive constants $C$. Therefore $\left( \E^{-2\alpha\|X\|}\; P  \; \E^{\alpha\|X\|} \right) $ is a Hilbert--Schmidt operator. Similarly, considering \eqref{NPIntKernel_SLoc} instead of \eqref{DiagLocIntKErnel}, we have
	$$
	\left| \left( \E^{-\alpha\|X\|} \;\NP_1\; \langle X \rangle \right) (\x,\y) \right| \leq  \E^{-\tilde{\alpha} \|\x\|} C \langle \x-\y \rangle^{-(s-\frac{5}{2}-\epsilon)} \,,
	$$
	where $0<\tilde{\alpha}<\alpha$. Since $s>\frac{7}{2}$, we can choose $\epsilon$ small enough so that the integral kernel of $ \left( \E^{-\alpha\|X\|} \;\NP_1\; \langle X \rangle \right)$ is in $L^2(\R^2\times\R^2)$ and hence the operator is Hilbert--Schmidt. Eventually, since $\langle X \rangle^{-1} \; X_2$ and $\chi_{\Lambda_L}\; \E^{2\alpha\|X\|}$ are bounded operators and the product of two Hilbert--Schmidt operators is trace class, the proof is over.
\end{proof}

\begin{rmk}
	Although the proofs of Proposition~\ref{TraceClassProof} and Proposition~\ref{TraceClass2Proof} appear similar, their hypotheses have a crucial difference. Proposition~\ref{TraceClassProof} requires only the localization of the integral kernel of the projection, which follows from the existence of a gap in the spectrum of $H$, while Proposition~\ref{TraceClass2Proof} requires, beyond the gap assumption, the existence of a GWB for the projection with a particular decay of the GWFs. 
\end{rmk}

\subsection{Proof of Proposition~\ref{prop:ImpoEstimates}}
\label{ProofSLoc}
Consider $\Lambda_L = [-L,L]^2 \subset \R^2$ and $\gamma \in \Lattice \cap \Lambda_L$. 
To estimate \virg{how much of $\w_{\gamma,a}$ is outside $\Lambda_L$}  we consider
$\left\| \chi_{\Lambda^c_L} \w_{\gamma,a} \right\|$
where $\chi_{\Lambda_L^c}$ is the  characteristic function of the complementary set of $\Lambda_L$. 
We get
\begin{equation}
\begin{aligned}
\label{MassOutSL}
\norm{\chi_{\Lambda_L^c}\w_{\gamma,a}} &
\leq \left( \sup_{\x \in \R^2} \left( \chi_{\Lambda_L^c}(\x) \jp{\x-\gamma}^{-2s} \right)\int_{\R^2} \jp{\x-\gamma}^{2s} |\w_{\gamma,a}(\x)|^2  d\x \right)^{\frac{1}{2}}  \\
&\leq M^{\frac{1}{2}}\left(\jp{L-|\gamma_1|}^{-2s} + \jp{L-|\gamma_2|}^{-2s}\right) \, .
\end{aligned}
\end{equation}
Using Remark \ref{rmk:SeriesToInt}, one can see by explicit integration that, if $s>\frac{1}{2}$, then both terms on the right hand side of \eqref{MassOutSL} are integrable in $\R$ and therefore there exists a positive constant $\mo$ such that
$
\sum\limits_{\xi \in \Lambda_L , \, c} \left\|\chi_{\Lambda_L^c} \w_{\xi,c}\right\| \leq  \mo L 
$
.

If, instead, $\gamma$ is outside $\Lambda_L$ we have that
\begin{equation}
\begin{aligned}
\label{MassInSL}
\norm{\vphantom{\chi_{\Lambda_L^c} }\chi_{\Lambda_L} \w_{\gamma,a}} 
&\leq \left( \sup_{\x \in \R^2} \left[ \chi_{\Lambda_L}(\x) \jp{\x-\gamma}^{-2s}\right]\int_{\R^2} \jp{\x-\gamma}^{2s} |\w_{\gamma,a}(\x)|^2  d\x \right)^{\frac{1}{2}} \\
& \leq M^{\frac{1}{2}}\left( \jp{|\gamma_1|-L}^{-2s}\chi_{A_1}(\gamma) + \jp{|\gamma_2|-L}^{-2s}\chi_{A_2}(\gamma) \right)  \\
&\phantom{ \leq } + M^{\frac{1}{2}}\jp{\sqrt{||\gamma_1|-L|^2 +||\gamma_2|-L|^2}}^{-2s} \chi_{A_3}(\gamma) \,,
\end{aligned}
\end{equation}
where $A_1:=([-\infty,-L]\cup[L,\infty])\times [-L,L]$, $A_2:= [-L,L]\times([-\infty,-L]\cup[L,\infty])$ and $A_3:=\Lambda_L^c \setminus (A_1 \cup A_2)$, and $\chi_{A_i}$, with $i \in \left\{1,2,3\right\}$, is the characteristic function of the set $A_i$. As in the previous case, using Lemma \ref{prop:SeriesToInt} and Remark \ref{rmk:SeriesToInt}, one can see by explicit integration that, if $s>\frac{1}{2}$, then the first two terms on the right hand side of \eqref{MassInSL} are integrable in $\R$ and, if $s>1$, the last term is integrable in $\R^2$. Therefore there exists a positive constant $\mi$ such that
$
\sum\limits_{\xi \notin \Lambda_L \, , c} \|\chi_{\Lambda_L} \w_{\xi,c}\|  \leq  \mi L 
$
.

It remains to prove the second part of Proposition~\ref{prop:ImpoEstimates}, concerning the off-diagonal terms of $\widetilde{X}_i$. Considering the generic matrix element of $\widetilde{X}_1-\NP_1$, one has
\begin{align}
\nonumber
\sum\limits_{a=1}^{m(\gamma)}\sum\limits_{b=1}^{m(\eta)}|\inner{\w_{\gamma,a}}{ (X_1-\gamma_1)\w_{\eta,b}}| 
\nonumber
&\leq(m_*)^2 \max_{1\leq a \leq m(\gamma)} \max_{ 1\leq b \leq m(\eta)}|\inner{\w_{\gamma,a}}{ (X_1-\gamma_1) \w_{\eta,b}}|  \\
\nonumber 
&\leq  (m_*)^2  \int_{\R^2} d\x \, \left|\w_{\gamma,\tilde{a}}(\x)\right|  |x_1-\gamma_1| \left| \w_{\eta,\tilde{b}}(\x)\right| ,
\end{align}
where $\tilde{a}$ and $\tilde{b}$ are the maximizers of $|\inner{\w_{\gamma,a}}{ (X_1-\gamma_1) \w_{\eta,b}}|$. Therefore we get that
\begin{align*}
&\langle \gamma- \eta \rangle^{s-1}  (m_*)^2  \int_{\R^2} d\x  \,\left| \w_{\gamma,\tilde{a}}(\x) \right| \,  |x_1-\gamma_1| \left| \w_{\eta,\tilde{b}}(\x) \right| \\
&\leq C_s (m_*)^2  \int_{\R^2} d\x \,\langle \gamma - \x \rangle^{s-1} \left| \w_{\gamma,\tilde{a}}(\x) \right|  \langle \x-\gamma \rangle \langle \x - \eta \rangle^{s-1}  \left| \w_{\eta,\tilde{b}}(\x) \right|  \\
&\leq C_s (m_*)^2  M \, ,
\end{align*}
where in the last inequality we have used Cauchy-Schwarz inequality together with property \eqref{GWFLoc}. This implies that
$$
(m_*)^2  \int_{\R^2} d\x  \,\left| \w_{\gamma,a}(\x) \right| \,  |x_1-\gamma_1| \left| \w_{\eta,{b}}(\x) \right| \leq k_s \langle \gamma- \eta \rangle^{-(s-1)}. 
$$

As clear from above computation,  $k_s$ depends, actually, also on $m_*, K, C_s$. 
The same computation goes through exchanging $X_1-\gamma_1$ with $X_2-\gamma_2$. 
Therefore, the function $F$ defined by
\begin{equation}
\label{FEstimate}
F(\|\x\|):= k_s \langle \x \rangle^{-(s-1)}
\end{equation}
satisfies \eqref{FOffDiagonal}. A direct computation shows that $F$ satisfies also the requirements \eqref{F1} for every $s>3$, \eqref{F2} for every $s>2$ and \eqref{F3} for every $s>4$.

Finally, we show that \eqref{F4} is satisfied for every $s>3$. Indeed, consider the integral
\begin{equation}
\label{Finale1SLoc}
\int_{\Lambda_L} d \x \int_{\R^2\setminus \Lambda_L} d\y \, \frac{1}{(1+ \|\x-\y\|^2)^{(s-1)}} \, .
\end{equation} 
Since the integrand is positive the order of integration does not affect the result. For a fixed $\x \in \Lambda_L$ 
and $\alpha:=s-1$, we have the inequality 
\begin{equation}
\label{eq:aux2}
\frac{1}{(1+ \|\x-\y\|^2)^\alpha} \leq  \frac{1}{(1+ \|\x-\y\|^2)^\frac{\alpha}{2}} \frac{1}{(1+ (\dist(\x,\partial\Lambda_L))^2)^\frac{\alpha}{2}} \, .
\end{equation}
Since $s>3$, the right hand side in \eqref{eq:aux2} is integrable in $\y$, hence we obtain
$$
\begin{aligned}
&\int_{\Lambda_L} d \x \int_{\R^2\setminus \Lambda_L} d\y \,  \frac{1}{(1+ \|\x-\y\|^2)^\alpha}  
\leq C \int_{\Lambda_L} d \x \frac{1}{(1+ (\dist(\x,\partial\Lambda_L))^2)^\frac{\alpha}{2}}
\end{aligned}
$$
for some positive constant $C$. By an elementary estimate and explicit integration one can see that the integral with respect to $\x$ is of order $L$. This concludes the proof. \qed


\vspace{20mm}

{\footnotesize  

	\begin{tabular}{ll}
		
        	(G. Marcelli) 
        	&  \textsc{Mathematics Area, SISSA} \\ 
        	&   Via Bonomea 265, 34136 Trieste, Italy \\
        	&  {E-mail address}: \href{mailto:giovanna.marcelli@sissa.it}{\texttt{giovanna.marcelli@sissa.it}}\\
        	\\
		(M. Moscolari) 
		&  \textsc{Fachbereich Mathematik, Eberhard Karls Universit\"{a}t T\"{u}bingen} \\
		&   Auf der Morgenstelle 10, 72076 T\"{u}bingen, Germany \\
		&  {E-mail address}: \href{mailto:massimo.moscolari@mnf.uni-tuebingen.de}{\texttt{massimo.moscolari@mnf.uni-tuebingen.de}} \\
		\\
		(G. Panati) 
		&  \textsc{Dipartimento di Matematica, \virg{La Sapienza} Universit\`{a} di Roma} \\
		&  Piazzale Aldo Moro 2, 00185 Rome, Italy \\
		&  {E-mail address}: \href{mailto:panati@mat.uniroma1.it}{\texttt{panati@mat.uniroma1.it}} \\
		\\

	\end{tabular}
	
}
\end{document}